\theoremstyle{definition} 
\definecolor{DarkGray}{gray}{.45}
\newcommand{\return}{\vspace{0.3cm}}
\DeclareMathOperator*{\argmin}{arg\,min}
\newtheorem{defn}{Definition}
\newtheorem{thm}{Theorem}
\newtheorem{lem}{Lemma}
\newcommand{\enc}{\textsf{E}}
\newcommand{\bin}{\textsf{B}}
\newcommand{\curr}{\textsf{C}}
\newcommand{\last}{\textsf{L}}
\newcommand{\vb}{\textsf{VByte}}
\newcommand{\clue}{\textsf{ClueWeb09}}
\newcommand{\gov}{\textsf{Gov2}}
\newcommand{\cc}{\textsf{CCNews}}
\newcommand{\docs}{\textsf{docs}}
\newcommand{\freqs}{\textsf{freqs}}
\newcommand{\parag}[1]{{{\return\noindent\textbf{ #1.}}}}
\newcommand{\pp}{{\hspace{0.15cm}}}
\newcommand{\removelatexerror}{\let\@latex@error\@gobble}
\newcommand{\code}[1]{{\textsf{\textbf{#1}}}}
\newcommand{\func}[1]{{\textsf{#1}}}
\newcommand{\var}[1]{{\textit{#1}}}
\begin{document}

\title{On Optimally Partitioning Variable-Byte Codes}


\author{Giulio Ermanno Pibiri and Rossano Venturini
\IEEEmembership{-- ISTI-CNR and University of Pisa, Italy}
\IEEEcompsocitemizethanks{\IEEEcompsocthanksitem
This work was partially supported by the BIGDATAGRAPES project
(grant agreement \#780751, European Union's Horizon 2020 research and innovation programme)
and MIUR-PRIN 2017 ``Algorithms, Data Structures and Combinatorics for Machine Learning''.
\protect\\
E-mails: \textsf{giulio.pibiri@di.unipi.it}, \textsf{rossano.venturini@unipi.it}.}
}


\IEEEtitleabstractindextext{%
\begin{abstract}
The ubiquitous \emph{Variable-Byte} encoding is one of the fastest compressed representation for integer sequences. However, its compression ratio is usually not competitive with other more sophisticated encoders, especially when the integers to be compressed are small that is the typical case for inverted indexes.
This paper shows that the compression ratio of Variable-Byte can be improved by $2\times$ by adopting a \emph{partitioned} representation of the inverted lists.
This makes Variable-Byte surprisingly competitive in space with the best bit-aligned encoders, \emph{hence} disproving the folklore belief that Variable-Byte is space-inefficient for inverted index compression.
Despite the significant space savings, we show that our optimization almost comes for free, given that:
we introduce an optimal partitioning algorithm that does not affect indexing time because of its linear-time
complexity;
we show that the query processing speed of Variable-Byte is preserved, with an extensive experimental analysis and comparison with several other state-of-the-art encoders.
\end{abstract}
\begin{IEEEkeywords}
Inverted Index Compression, Variable-Byte Encoding, Performance Evaluation
\end{IEEEkeywords}
}

\maketitle

\IEEEdisplaynontitleabstractindextext

\IEEEpeerreviewmaketitle

\section{Introduction}\label{sec:intro}
\IEEEPARstart{T}{he} \emph{inverted index} is the core data structure at the basis of large-scale search engines,
database architectures and social networks~\cite{2006:zobel.moffat,2008:manning.raghavan.ea,Buttcher-book,2013:curtiss.becker.ea,2012:busch.gade.ea}.
In its simplicity, the inverted index can be regarded as a collection of sorted integer sequences, called inverted or posting lists.
For example, when the inverted index is used to support full-text search in databases, each list is associated to a vocabulary term and stores the sequence of integer identifiers of the documents (docIDs) that contain such term~\cite{2008:manning.raghavan.ea}.
Then, identifying a set of documents containing all the terms in a user query reduces to the problem of intersecting the inverted lists associated to the terms in the query.
Likewise, an inverted list can be associated to a user in a social network (e.g., Facebook) and stores the sequence of all the friend identifiers of the user~\cite{2013:curtiss.becker.ea}.
Database systems based on SQL often precompute the list of row identifiers matching a specific frequent predicate over a large table, in order to speed up the execution of a query involving the conjunction of (possibly) many predicates~\cite{hristidis2003efficient,raman2007lazy}.
Also, finding all occurrences of twig patterns in XML databases can be done efficiently by resorting on an inverted index~\cite{bruno2002holistic}.
A common feature of key-value storage architectures like Apache Ignite and Redis,
is the organization of data elements falling into the same bucket due to a hash collision:
the list of all such elements is materialized that is, essentially, an inverted list~\cite{debnath2011skimpystash}.

Because of the huge quantity of data processed on a daily basis by the mentioned systems, \emph{compressing} the inverted index is indispensable since it can introduce a two-fold advantage over a non-compressed representation: feed faster memory levels with more data and, \emph{hence}, speed up the query processing algorithms.
As a result, the design of algorithms that compress the index effectively \emph{and} maintain a noticeable decoding speed is an old problem in Computer Science, that dates back to more than 50 years ago, and still a very active field of research.
Many representation for inverted lists are known, each exposing a different space/time trade-off~\cite{EBDT2018}.

Among these, \emph{Variable-Byte}~\cite{thiel1972program,1999:williams.zobel} (henceforth, $\vb$) is the most popular and used byte-aligned code.
In particular, $\vb$ owes its popularity to its \emph{sequential} decoding speed and, indeed, it is one of the fastest representation for integer sequences.
For this reason, it is widely adopted by well-known companies as a key tool to enable fast search of records.
We mention some noticeable examples.
Google uses $\vb$ extensively: for compressing the posting lists of inverted indexes~\cite{2009:dean} and as a binary wire format for its protocol buffers~\cite{protobuf}.
IBM DB2 employs $\vb$ to store the differences between successive record identifiers~\cite{bhattacharjee2009efficient}.
Amazon patented an encoding scheme, based on $\vb$ and called Varint-G8IU, which uses SIMD (Single Instruction Multiple Data) instructions to perform decoding faster~\cite{2011:stepanov.gangolli.ea}.
Many other storage architectures rely on $\vb$, such as Redis~\cite{redisearch}, UpscaleDB~\cite{upscaledb} and Dropbox~\cite{dropboxtechblog}.

\begin{figure*}
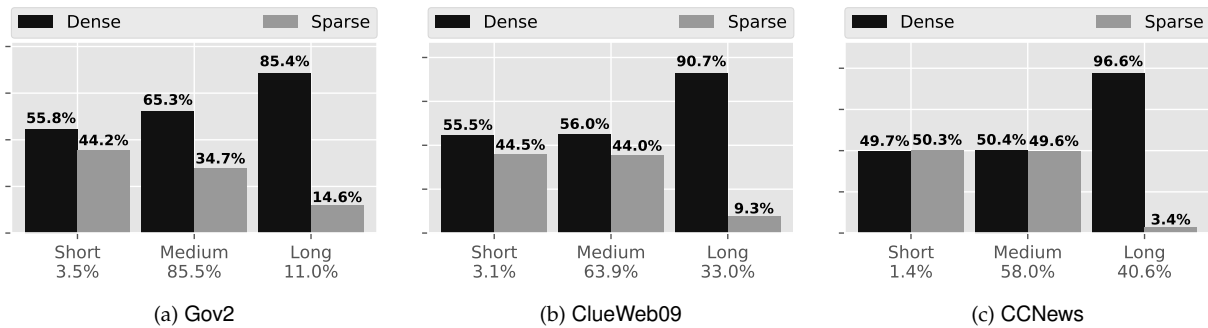

    \centering
    \subfloat[\gov]{
    \includegraphics[scale=0.65]{{{plots/gov2.postings_distribution.10000.7000000.bw}}}
    }
    \subfloat[\clue]{
    \includegraphics[scale=0.65]{{{plots/clueweb09.postings_distribution.10000.7000000.bw}}}
    }
    \subfloat[\cc]{
    \includegraphics[scale=0.65]{{{plots/ccnews.postings_distribution.10000.7000000.bw}}}
    }
    \caption{Percentage of integers belonging to \emph{dense} and \emph{sparse} regions of the inverted lists for the tested datasets. The inverted lists have been clustered by size into three categories: short (size $<$ 10K); medium (10K $\leq$ size $<$ 7M); long (size $\geq$ 7M). Below each category we also indicate the percentage of integers belonging to its inverted lists.}
     \label{fig:postings_distr}
\end{figure*}

We now quickly review how the $\vb$ encoding works.
It was first described by Thiel and Heaps~\cite{thiel1972program}.
The binary representation of a non-negative integer is divided into groups of $7$ bits which are represented as a sequence of bytes. In particular, the $7$ least significant bits of each byte are reserved for the data whereas the most significant (the $8$-th), called the \emph{continuation bit}, is equal to 1 to signal continuation of the byte sequence.
The last byte of the sequence has its $8$-th bit set to 0 to signal, instead, the termination of the byte sequence.
As an example, the integer \num{65 790} is represented as $\underline{1}0000100{\,}\underline{1}0000001{\,}\underline{0}1111110$ (with control bits underlined.)
Also, notice the padding bits, in the first byte starting from the left, inserted to align the binary representation of the number to a multiple of $8$ bits.
In particular, $\vb$ uses $\lceil \frac{\lceil \log_2 (x + 1) \rceil}{7} \rceil \times 8$ bits to represent an integer $x \geq 0$.
Decoding is simple: we just need to read one byte at a time until we find a value smaller than $2^7$.
As already mentioned, the format is also suitable for SIMD instructions for speeding up sequential decoding.

The main drawback of $\vb$ lies in its byte-aligned nature, which means that the number of bits needed to encode an integer cannot be less than $8$. For this reason, $\vb$ is only suitable for large numbers.
However, the inverted lists are notably known to exhibit a \emph{clustering effect}, i.e., these contain regions of (very) close identifiers that are far more compressible than highly scattered regions~\cite{EBDT2018,2000:moffat.stuiver,2014:ottaviano.venturini}.
Such natural clusters are present because the indexed data itself tend to be very similar.
As a simple example, consider all the Web pages belonging to the same domain: these are likely to share a lot of terms.
Also, the values stored in the columns of databases typically exhibit high locality and that is why column-oriented databases can achieve very good compression and high query throughput~\cite{abadi2013design}.

The key point is that effective inverted index compression should exploit as much as possible the clustering effect of the inverted lists.
$\vb$ currently fails to do so and, as a consequence, it is believed to be space-inefficient for inverted indexes, because its space occupancy can be up to $3\times$ larger than bit-aligned compressors~\cite{2014:ottaviano.venturini,2015:ottaviano.tonellotto.ea,2017:pibiri.venturini,ANS2}.

\parag{The motivating experiment}
As an illustrative example, consider the following two sequences: $\langle 1,2,3,4,5 \rangle$ and $\langle 127,254,318,408,533 \rangle$.
To reduce the values of the integers, $\vb$ compresses the differences between successive values, known as \emph{delta}-gaps or $d$-gaps, i.e., the sequences $\langle 1,1,1,1,1 \rangle$ and $\langle 127, 127,64,90,125 \rangle$ respectively (the first integer is left as it is).
Now, it is easy to see that $\vb$ will use 5 bytes to encode \emph{both} sequences,
but the first one can be compressed much better, with just $\approx$log$_2$ 5 bits.
To better highlight how this behavior can deeply affect compression effectiveness, we consider the statistic shown in Fig.~\ref{fig:postings_distr}.
This statistic reports the percentage of integers belonging to \emph{dense} and \emph{sparse} regions of the lists for the datasets $\gov$, $\clue$ and $\cc$.
More precisely, the plot originated from the following experiment: we divided each inverted list into blocks of 128 integers and we considered as \emph{sparse} a block where $\vb$ yielded a better space occupancy with respect to the \emph{characteristic bit-vector} representation of the block (if $u$ is the last element in the block, we have the $i$-th bit set in a bitmap of size $u$ for all integers $i$ belonging to the block), regarded to as the \emph{dense} case.
We also clustered the inverted lists by their sizes, in order to show where dense and sparse regions are most likely to be present.

The experiment clearly shows that \emph{we have a majority of dense regions}, thus explaining why in this case $\vb$ is not competitive with respect to bit-aligned encoders and, thus, motivating the need for introducing a better encoding strategy that adapts to such distribution.
We can also conclude that such optimization is likely to pay off because almost the entirety of integers concentrate in the lists of medium and long size (thanks to the Zipfian distribution of words in text), where indeed the majority of them belong to dense chunks.

\parag{Our contributions} We list here our main contributions. \\\\
\noindent
\textbf{1)}
We disprove the folklore belief that $\vb$ is too large to be considered space-efficient for representing inverted indexes, by exhibiting an improved compression ratio of $2\times$ on standard datasets consisting in several billions of integers, such as {\gov}, {\clue} and {\cc}.

The result is achieved by partitioning the inverted lists into blocks and representing each block with the most suitable encoder, chosen among $\vb$ and its characteristic bit-vector representation.
Partitioning the lists has the potential of adapting compression to the distribution of the integers in the lists, such as the one shown in Fig.~\ref{fig:postings_distr}, i.e. using $\vb$ for the sparse regions where larger $d$-gaps are likely to be present. \\

\noindent
\textbf{2)}
Since we cannot expect the dense regions of the lists be always aligned with uniform boundaries, we consider the problem of minimizing the space of representation of an inverted list of size $n$ by representing it with variable-length partitions.
By exploiting the fact that \textsf{VByte} is a \emph{point-wise} encoder, i.e.,
the number of bits to represent an integer solely depends on the \emph{value} of the integer itself and not on the universe and size of the block to which it belongs to, we introduce an algorithm that finds an \emph{optimal} partitioning in $\Theta(n)$ time and $O(1)$ space.

We remark that an algorithm based on dynamic programming~\cite{2014:ottaviano.venturini} can be used as well to find a $(1+\epsilon)$-optimal solution to the problem, by taking $O(n \log_{1+\epsilon}\frac{1}{\epsilon})$ time and $O(n)$ space for any $\epsilon \in (0,1)$.
Apart from being approximated rather than exact, this solution is generally applicable to any encoder whose size in bits can be computed (or estimated) in constant time on a block and does not rely on the fact that \textsf{VByte}
is point-wise. As a consequence, it is also noticeably slower than the algorithm we introduce in this work.
Lastly, we also remark that, although we use \textsf{VByte} in the experiments, our approach can be applied to \emph{any} point-wise encoder. \\

\noindent
\textbf{3)}
We conduct an extensive experimental analysis to demonstrate the effectiveness of our approach on standard large datasets, such as $\gov$, $\clue$ and $\cc$.
More precisely, when compared to the un-partitioned $\vb$ indexes, the optimally-partitioned counterparts are: (1) \emph{significantly smaller}, by $2\times$ on average; (2) \emph{not} slower at computing boolean conjunctions; (3) even faster to build on large datasets thanks to the introduced fast partitioning algorithm and improved compression ratio.

We compare the performance of partitioned $\vb$ indexes against several state-of-the-art encoders, such as: partitioned Elias-Fano (\textsf{PEF})~\cite{2014:ottaviano.venturini}, Binary Interpolative coding (\textsf{BIC})~\cite{2000:moffat.stuiver}, the optimized PForDelta (\textsf{OptPFD})~\cite{2009:yan.ding.ea},
an index organization~\cite{ANS2} based on Asymmetric Numeral Systems (\textsf{ANS})~\cite{duda2013asymmetric} and the \textsf{QMX} mechanism~\cite{2014:trotman}.
The partitioned $\vb$ representation reduces the gap between the space of $\vb$ and the one of the best bit-aligned compressors, such as \textsf{PEF} and \textsf{BIC},
by passing from an average original gap of $138\%$ to only $11\%$ with respect to \textsf{PEF}; from $174\%$ to only $22\%$ with respect to \textsf{BIC}.
Moreover, it does not introduce any query processing overhead: only \textsf{QMX} is slightly faster on {\clue} by $1 \div 10\%$, but also $30\%$ larger.

\section{Problem statement and related work on partitioning algorithms}\label{sec:related}

In this paper we study the problem of partitioning a sorted integer sequence $S$ of size $n$ to improve its compression, by adopting a \emph{2-level} representation.
This representation stores $S$ as a sequence of partitions $L_2[S_1,\ldots,S_k]$ that are concatenated in the second level $L_2$. The first level $L_1$ stores, instead, a fixed amount of bits, say $F$, for each partition $S_i$, needed to describe its size $n_i$ and largest element $u_i$. Clearly, $F$ can be safely upper bounded by $O(\log u)$ bits.
This representation has several important advantages over a shallow representation:
\begin{enumerate}
\item it permits to choose the most suitable encoder for each partition, given its size and upper bound, hence improving the overall space;
\item each partition $S_i$ can be represented in a smaller universe, i.e., $u_i - u_{i-1} - 1$, by subtracting to all its elements the ``base'' value $u_{i-1} + 1$, thus contributing to further reduction in space;
\item it allows a faster access to the individual elements of $S$, since we can first locate the partition to which an element belongs to and, then, conclude the search in that partition only.
\end{enumerate}

\parag{The problem}
Now, the natural arising problem is \emph{how} to choose the lengths and encoders for each partition in order to minimize the space of $S$.
As already noted, the problem is not trivial since we cannot expect dense regions of the lists being always aligned with fix-sized partitions.
While a dynamic programming recurrence computes an optimal solution to this problem in $\Theta(n^2)$ time and $O(n)$ space by (trivially) considering the cost of all possible splittings, this approach is clearly unfeasible already for modest sizes of the input.
Therefore, we need smarter methods such as the ones we describe in the following.

\subsection{Partitioning algorithms}\label{subsec:opt_part}
The simplest partitioning strategy chooses the length $b$ of every partition, e.g., $b = 128$ integers, and splits the list into $\lceil n / b \rceil$ blocks (the last partition can be possibly smaller than $b$ integers).
We call this partitioning strategy, \emph{uniform}.
The advantage of this representation is simplicity, since no expensive calculation is needed prior to encoding.
However, we cannot expect this strategy to yield the most compact indexes because the highly clustered regions of inverted lists could likely be broken by such fix-sized partitions.

This is the main motivation for introducing optimization algorithms that try to find the best partitioning of the list, thus minimizing its space of representation.
Silvestri and Venturini~\cite{2010:silvestri.venturini} obtain a $O(h \times n)$ construction time, where $h$ is the size of the longest allowed partition.
Ferragina \emph{et al.}~\cite{2011:ferragina.nitto.ea} improve the result by Buchsbaum \emph{et al.}~\cite{2003:buchsbaum.ea} by computing a partitioning whose cost is guaranteed to be at most $(1+\epsilon)$ times away from the optimal one, for any $\epsilon \in (0,1)$, in $O(n \log_{1+\epsilon} n)$ time.
Their approach can be applied to any encoder $\enc$ whose cost in bits can be computed (or, at least, estimated) in constant time for any portion of the input.

\parag{Dynamic programming: slow and approximated}
Ottaviano and Venturini~\cite{2014:ottaviano.venturini} resort to similar ideas to the ones presented by Ferragina \emph{et al.}~\cite{2011:ferragina.nitto.ea} to obtain a running time of $O(n \log_{1+\epsilon}\frac{1}{\epsilon})$, and yet, preserving the same approximation guarantees.
Note that the complexity is $\Theta(n)$ as soon as $\epsilon$ is constant.
The core idea of this approach is to not consider \emph{all} possible splittings, but only the ones whose cost is able of amortizing the fix cost $F$.
We quickly review their approach.

The problem of determining the partitioning of minimum cost can be modeled as the problem of finding a path of minimum cost (shortest) in a complete, weighted and directed acyclic graph (DAG) $\mathcal{G}$. This DAG has $n$ vertices, one for each position of $S$, and it is complete, i.e., it has $\Theta(n^2)$ edges where the cost $C(i,j)$ of edge $(i,j)$ represents the number of bits needed to represent $S[i, j]$.
Since the DAG is complete, a simple shortest path algorithm will not suffice to compute an optimal solution efficiently.
Thus, we proceed by \emph{sparsification} of $\mathcal{G}$, as follows.
We first consider a new DAG $\mathcal{G}_\epsilon$, which is obtained from $\mathcal{G}$ and has the following properties: (1) the number of edges is $O(n \log_{1+\epsilon}\frac{U}{F})$ for any given $\epsilon \in (0,1)$; (2) its shortest path distance is at most $(1+\epsilon)$ times the one of the original DAG $\mathcal{G}$, where $U$ represents the encoding cost of $S$ when no partitioning is performed.
It can be proven that the shortest path algorithm on $\mathcal{G}_\epsilon$ finds a solution which is at most $(1+\epsilon)$ times larger than an optimal one, in time $O(n \log_{1+\epsilon}\frac{U}{F})$, because $\mathcal{G}_\epsilon$ has $O(n \log_{1+\epsilon}\frac{U}{F})$ edges~\cite{2011:ferragina.nitto.ea}.
To further reduce the complexity by preserving the same approximation guarantees, we define two approximation parameters: $\epsilon_1 \in [0,1)$ and $\epsilon_2 \in [0,1)$.
We first retain from $\mathcal{G}$ all the edges whose cost is no more than $L = \frac{F}{\epsilon_1}$, then we apply the pruning strategy described above with $\epsilon_2$ as approximation parameter. The obtained graph has now $O(n \log_{1+\epsilon_2}\frac{L}{F}) = O(n \log_{1+\epsilon_2}\frac{1}{\epsilon_1})$ edges, which is $\Theta(n)$ as soon as $\epsilon_1$ and $\epsilon_2$ are constant.
Again, it can be proven that the shortest path distance is no more than $(1+\epsilon_1)(1+\epsilon_2) \leq (1+\epsilon)$ times the one in $\mathcal{G}$ by setting $\epsilon_1 = \epsilon_2 = \frac{\epsilon}{3}$~\cite{2014:ottaviano.venturini}.

Despite the \emph{theoretical} linear-time complexity for a constant $\epsilon$, the main drawback of the algorithm lies in the high constant factor.
For example, even by setting $\epsilon = 0.03$
we obtain a hidden constant of $\log_{1+0.03} \frac{1}{0.03} \simeq 118.63$, which results in a noticeable cost in practice.
Although enlarging $\epsilon$ can reduce the constant at the price of reducing the compression efficacy, this remains the bottleneck for the building step of large inverted indexes.

\section{Optimal partitioning in linear time: fast and exact}\label{sec:optimal_splitting}
The interesting research question we now pose is whether there exist an algorithm that finds an \emph{exact} solution, rather than approximated, in linear time and with \emph{low} constant factors.
This section answers positively to this question by showing that if the cost function of the chosen encoder is \emph{point-wise}, i.e., the number of bits need to represent an integer solely depends on the value of such integer and not on the universe and size of the partition it belongs to, the problem of determining and optimal partition admits an \emph{exact} solution in $\Theta(n)$ time and $O(1)$ space.

In the following, we first overview and discuss our solution by explaining the intuition that lies at its core, then we give the full technical details along with a proof of optimality and the relative pseudocode.

\subsection{Overview}
We are interested in computing the partitioning of $S$ whose encoding cost is minimum by using \emph{two} different encoders that take into account the relation between the size and universe of each partition. We already motivated the potential of this strategy by commenting on Fig.~\ref{fig:postings_distr}, which shows the distribution of the integers in dense and sparse regions of the inverted lists.
Let us consider the partition $S[i,j)$, $0 \leq i < j \leq n$, of relative universe $u = S[j-1]-S[i-1]-1$ and size $b = j - i$. Intuitively, when $b$ gets closer to $u$ the partition becomes denser; vice versa, it becomes sparser whenever $b$ diverges from $u$.
Thus the encoding cost $C(S[i,j))$ is chosen to be the minimum between $\bin(S[i,j)) = u$ bits (dense case) and $\enc(S[i,j))$ bits (sparse case), where $\bin$ is the characteristic bit-vector representation of $S[i,j)$ and $\enc$ is the chosen point-wise encoder for sparse regions.

Examples of point-wise encoders are $\vb$~\cite{thiel1972program}, Elias' $\gamma$-$\delta$~\cite{1975:elias} and Golomb~\cite{1966:golomb}.
Other encoders, such as Elias-Fano~\cite{1971:fano,1974:elias}, Binary Interpolative coding~\cite{2000:moffat.stuiver} and PFor-Delta~\cite{2009:yan.ding.ea} are not point-wise, since a different number of bits could be used to represent the \emph{same} integer when belonging to partitions having different characteristics, namely different size and universe.
To clarify what we mean, consider the following example sequence:
$$S[0,10) = \langle 8, 9, 10, 11, 12, 36, 37, 38, 39, 40 \rangle.$$
Let us now compare the behavior of Elias-Fano (non point-wise) and $\vb$ (point-wise).
By performing no splitting, Elias-Fano will use $\lceil \log_2(40/10)\rceil + 2 = 4$ bits to represent each integer.
By performing the splitting $[0,5)[5,10)$, the first five values will be represented with $4$ bits each, but the next five values with $\lceil \log_2((40 - 12 - 1) / 5) \rceil + 2 = 5$ bits each. Instead, by performing the splitting $[0,6)[6,10)$, the first six values will use $5$ bits each, while the next four only $2$ bits each.
Thus, performing different splittings changes the cost of representation of the \emph{same} postings for a non point-wise encoder, such as Elias-Fano.
Instead, it is immediate to see that $\vb$ will encode each element with 8 bits, regardless of any partitioning.

\parag{The intuition}
The above example gives us an intuitive explanation of why it is possible to design a light-weight approach for a point-wise encoder $\enc$:
we can compute the number of bits needed to represent a partition of $S$ with $\enc$ by just scanning its elements and summing up their costs, \emph{knowing that performing a splitting will not change their cost of representation} nor, therefore, the one of the partition.
This means that as long as the cost $\enc(S[0,j))$, for some $0 < j \leq n$, is less than $\bin(S[0,j))$ we know that $S[0,j)$ will be better represented with $\enc$ rather than with $\bin$.
Therefore, we can safely keep scanning the sequence until the difference in cost between $\enc(S[0,j))$ and $\bin(S[0,j))$ becomes more than $F$ bits. At this point, it means that $\enc$ is wasting more than $F$ bits with respect to $\bin$, thus we should stop encoding with $\enc$ the current partition because we can afford to pay the fix cost $F$ and continue the encoding with $\bin$.
Now, the crucial question is: at which position $k < j$ should we stop encoding with \textsf{E} and switch to \textsf{B}?
The answer is simple: we should stop at the position $k < j$ at which we saw the \emph{maximum} difference between the costs of $\enc$ and $\bin$, because splitting in \emph{any} other point will yield a larger encoding cost.
In other words, $k$ represents the position at which $\enc$ \emph{gains the most} with respect to $\bin$, so we will be wasting bits by splitting before or after position $k$.
Observe that we must also require such gain be more than $F$ bits,
otherwise switching encoder actually causes a waste of bits. In other terms, we say that in such case the gain would not be sufficient to amortize the fixed cost of the partition, meaning that we should \emph{not} split the sequence yet.

In conclusion, we encode $S[0,k)$ with $\enc$ and know that the elements $S[k,j)$ will now be better represented with $\bin$, rather than with $\enc$.
Fig.~\ref{fig:dominating} offers a pictorial representation of how the difference between the encoding costs of $\enc$ and $\bin$, referred to as the \emph{gain} function, changes during the scan of $S$. When the function is decreasing, it means that $\enc$ is winning over $\bin$, i.e., its encoding cost is less; conversely, when $\bin$ is more effective than $\enc$, the function is increasing.

After encoding the first partition $S[0,k)$, the process repeats: (1) we keep scanning $S$ until $\bin$ loses more than $2F$ bits with respect to $\enc$; at that point (2) we encode with $\bin$ the elements in $S[k, k^\prime)$ if the maximum gain of $\bin$ with respect to $\enc$, seen at position $k^\prime$, is greater than $2F$ bits.
We keep alternating compressors until the end of the sequence.

\begin{figure}
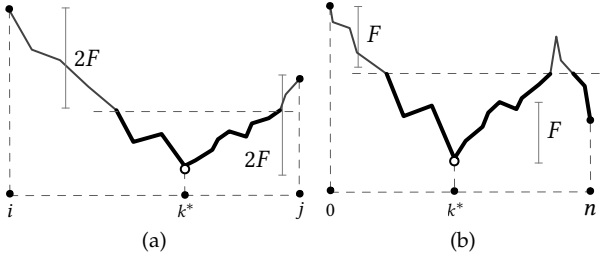

    \centering
    \subfloat[]{
    \includegraphics[scale=0.75]{{{imgs/gain1}}}
    \label{fig:dominating1}
    }
    \subfloat[]{
    \includegraphics[scale=0.75]{{{imgs/gain2}}}
    \label{fig:dominating2}
    }
    \caption{In case (a), we should split $[i,j)$ in $k^{\ast}$ because there the gain is the minimum among all points whose gain is below $2F$ from $i$ \emph{and} $j$;
    in case (b) we should \emph{not} split the sequence because, although an increase in gain of $F$ bits follows, we do not have a sufficiently high gain up to position $n$ to amortize the cost of the splitting.
    \label{fig:dominating}}
\end{figure}

\return
Before sketching a compact pseudocode of our algorithm, we first express some considerations.
First of all note that, for all partitions except the first, we need to amortize twice the fix cost, because we could potentially merge the last formed partition with the current one, thus, in order to be beneficial, the difference in the cost of the two encoders must be larger than $2F$ bits.
Again, refer to Fig.~\ref{fig:dominating1} for an example.
Also, for illustrative purposes, in the above discussion we have assumed that the first partition is encoded with $\enc$: clearly, $\bin$ could be better at the beginning but the algorithm will work in the very same way.

\parag{The algorithm}
In the most general terms, call $\last$ the encoder used to represent the \emph{last} encoded partition and $\curr$ the \emph{current} one.
These will be either $\enc$ or $\bin$.
We also indicate with the same letters the costs in bits of their representation of the current partition.
Finally, let $g^*$ indicate the best gain of $\curr$ with respect to $\last$.
At a high level, the skeleton of our algorithm looks as follows.
\begin{enumerate}
\item Encode the first partition.
\item Until the end of the sequence: if $| \curr - \last |$ and $g^*$ are greater than $2F$ bits, encode the current partition with $\curr$ and swap the roles of $\curr$ and $\last$.
\item Encode the last partition.
\end{enumerate}
In the above pseudocode, the encoding of the first and last partitions its treated separately because these must amortize a fix cost of $F$ bits instead of $2F$ bits: in fact, we do not have any partition before and after, respectively (see Fig.~\ref{fig:dominating2}).

It is immediate to see that the described approach can be implemented by using $O(1)$ space because we only need to keep the difference between the costs of $\enc$ and $\bin$ (plus some cursor variables), and that it runs in $\Theta(n)$ time because we calculate the cost in bits of each integer exactly once.
We have, therefore, eliminated the linear-space complexity of \emph{any} dynamic programming approach because we do not need to maintain the costs of the shortest path ending in each position of $S$.
Moreover, the introduced algorithm has very low constant factors in the time complexity, since it just performs few comparisons and updates of some variables for each integer of $S$.

\subsection{Technical discussion}
Let $S$ be a sorted integer sequence of size $n$.
In order to describe the properties of our solution, we first need the following definitions.

\begin{defn}\label{defn:gain}
Let $g : \mathbb{N} \cup \{0\} \rightarrow \mathbb{Z}$ be the \emph{gain} function, defined as
\begin{equation}
g(i) = 
     \begin{cases}
        0 & i = 0 \\
       \sum_{k = 0}^{i-1}[\enc(S[k]) - \bin(S[k])] & 0 < i \leq n
     \end{cases}.
\end{equation}
\end{defn}

\begin{defn}\label{defn:dominating_point}
Given the interval $[i,j)$ with $0 \leq i < j \leq n$, the position $k^{\ast}$ is \emph{dominating} $[i,j)$ for the encoder $\enc$, if
\begin{equation}\label{eq:dominating_point}
k^{\ast} = \argmin_{i < k \leq j} g(k) \mbox{ such that } g(i) - g(k^{\ast}) > T,
\end{equation}
where $T = F$ if $i = 0$ or $2F$ otherwise, and $j$ satisfies one of the following:
\begin{align}
& g(j) - g(k^{\ast}) > 2F, \mbox{ or} \label{eq:cond_1} \\
& g(z) - g(k^{\ast}) > F, \mbox{ for all } z \geq j. \label{eq:cond_2}
\end{align}
\end{defn}

\noindent
Notice that the dominating point could not exist for any interval $[i,j)$, but if it exists and $\enc(x) \neq \bin(x)$ for any $x \in S[i,j)$, it must be unique.
The definition of dominating point for encoder $\bin$ is symmetric to Definition~\ref{defn:dominating_point}.

The definition of dominating point explains that we can \emph{always improve} the cost of representation of $S[i,j)$ by splitting $[i,j)$ in the dominating point if it exists, otherwise we should \emph{not} split $[i,j)$.
It is easy to see that the point dominating $[i,j)$ is the point in which the difference of the costs between the two compressors is maximized, thus it will be only beneficial to split in this point rather than any other point, as we explained in the previous paragraph.
It is also easy to see why we should search the point dominating $[i,j)$ among the ones whose gain is at least $T$ bits less than $g(i)$.
The threshold $T$ is set to the minimum amount of bits needed to amortize the cost of switching from one compressor to the other.
Consider Fig.~\ref{fig:dominating1} and suppose we are encoding with $\bin$ before position $i$ and after $k^{\ast}$.
If we compress with $\enc$ the partition $S[i, k^{\ast})$, we are switching encoder twice, thus the gain in $k^{\ast}$ must be at least $2F$ bits less than $g(i)$ to be able of amortizing the cost for two switches. In Fig.~\ref{fig:dominating2}, instead, we have no partition before position $0$, thus we strive to amortize the cost for a single switch.

\return
\emph{Our strategy consists in splitting the sequence in the dominating points}.
More precisely, the solution $\mathcal{P} = [p_1, \ldots, p_m]$, $m \geq 1$, output by this strategy can be described by the following recursive equation.
\begin{equation}\label{eq:recursive}
p_i = 
     \begin{cases}
        0 & i = 0 \\
        n & i = m \\
        \text{dominating } [p_{i-1}, p_{i+1}) & \text{otherwise}
     \end{cases}
\end{equation}
In other words, any position in $\mathcal{P}$, except for the first and the last, is the dominating point of the interval whose endpoints are dominating points as well.

Notice that, by definition, there cannot be two adjacent dominating points that are relative to the same encoder, but they must be relative to different encoders.
In fact, suppose we have a dominating point $k^{\ast}$ for $\enc$. It means that we have seen an increase in gain of $2F$ after $k^{\ast}$, therefore: either the gain will then decrease sufficiently to find a dominating point for $\bin$, or the gain will never does so, thus, a dominating point after $k^{\ast}$ if not found.
This means that $\mathcal{P}$ alternates the choice of compressors, i.e., a partition encoded with $\enc$ is delimited by two partitions encoded with $\bin$ and viceversa (except for the first and last).
We call such behavior, \emph{alternating}.

In particular, our strategy will encode with compressor $\enc$ all partitions ending with a dominating point for $\enc$ (and starting with a dominating point for $\bin$, since $\mathcal{P}$ alternates the compressors). Symmetrically, the same holds for $\bin$.
As already pointed out, the only exception is made for the last partition, because the position $n$ cannot be dominating by definition (no increase or decrease in gain is possible after the end of the sequence).
In this case, the strategy selects the compressor that yields the minimum cost over the last partition.

\return
Since a \emph{feasible solution} to the problem is just either a singleton partition or consists in any sequence of strictly increasing positions, we argue that $\mathcal{P}$ is a feasible solution. This follows automatically by the definition of dominating point because such points are different from each other and, therefore, strictly increasing.
If no dominating points exist, then $\mathcal{P}$ will only contain position $n$ (one-past the end): it is a feasible solution too and indicates that $S$ should not be cut (singleton partition).
We now show the following lemma.

\begin{figure}
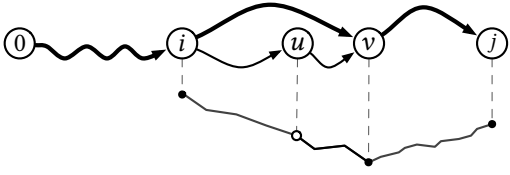

    \centering
    \includegraphics[scale=0.8]{{{imgs/proof}}}
    \caption{Path of minimum cost till position $j$ (thick black line) and its representation in terms of gain function; $u$ is the point dominating $[i,j)$.}
     \label{fig:proof}
\end{figure}

\begin{lem}\label{lem:optimality}
$\mathcal{P}$ is \emph{optimal}.
\end{lem}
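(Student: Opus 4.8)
The plan is to reduce the combinatorial optimization over all partitions to a one-dimensional problem about the gain function $g$, exploiting the fact that both $\enc$ and $\bin$ are point-wise and hence additive over any interval. First I would observe that for every interval $[i,j)$ we have $\enc(S[i,j)) - \bin(S[i,j)) = g(j) - g(i)$, so the cost of the cheaper encoder is
\begin{equation*}
\min\big(\enc(S[i,j)),\,\bin(S[i,j))\big) = \bin(S[i,j)) + \min\big(0,\,g(j)-g(i)\big).
\end{equation*}
Summing over the breakpoints $0 = p_0 < p_1 < \cdots < p_m = n$ of an arbitrary partition $\mathcal{P}$ and noting that $\sum_t \bin(S[p_{t-1},p_t))$ telescopes to a quantity independent of the cut points (up to a per-partition adjustment that I absorb into the fixed cost $F$), the total cost becomes $\mathrm{const} + \sum_{t=1}^m\big[\min(0,\,g(p_t)-g(p_{t-1})) + F\big]$. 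Hence the cost depends \emph{only} on the values taken by $g$ at the chosen breakpoints.

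Second, writing $\Delta g_t = g(p_t)-g(p_{t-1})$ and using that $\sum_t \Delta g_t = g(n)$ is fixed, I would rewrite $\sum_t \min(0,\Delta g_t) = g(n) - \sum_t \max(0,\Delta g_t)$, so that minimizing the cost is equivalent to \emph{maximizing} $\Psi(\mathcal{P}) = \sum_{t=1}^m\big[\max(0,\Delta g_t) - F\big]$. On this objective I would apply two exchange moves. Merging two consecutive segments along which $g$ moves in the same direction (both non-increasing, or both increasing) strictly increases $\Psi$ by $F$, since it removes one fixed charge while preserving the captured rise; this forces some optimum to be \emph{alternating}, matching the claimed structure of $\mathcal{P}$. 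Within an alternating solution, sliding each breakpoint to the exact $\argmin$ (resp. $\argmax$) of $g$ over its segment can only enlarge the captured rise, and this extremal placement is precisely the dominating point of \eqref{eq:dominating_point}.

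Third, I would pin down the threshold $T$ that decides whether a given up-swing of $g$ survives as its own partition. For an interior up-swing of height $\rho$, flanked by two down-segments, dissolving it by merging the three segments into one changes $\Psi$ by $-\rho + 2F$, so the swing is worth keeping exactly when $\rho > 2F$; for a swing anchored at position $0$ or $n$ there is a single flanking segment, the change is $-\rho + F$, and the swing survives when $\rho > F$. This reproduces the threshold $T$ of Definition~\ref{defn:dominating_point} ($F$ at the boundary, $2F$ in the interior) together with the conditions \eqref{eq:cond_1}--\eqref{eq:cond_2} on $j$, which are exactly the test that an up-swing is tall enough to amortize the cut. Combining the three steps, the alternating, extremal, above-threshold configuration is $\mathcal{P}$, and under the genericity assumption $\enc(x)\neq\bin(x)$ it is the unique such configuration.

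The hard part is upgrading these local exchange moves into a proof of global optimality. Each move shows that an arbitrary optimum can be rewritten into alternating, extremal, above-threshold form without increasing cost, but concluding that this canonical form coincides with $\mathcal{P}$ requires an induction establishing optimal substructure: I would prove by strong induction on the breakpoints that the prefix of some optimum ending at the first dominating point can be exchanged to agree with $\mathcal{P}$, and then recurse on the suffix. The genuine subtlety is that the dominating point depends on a forward-looking condition on $j$ via \eqref{eq:cond_1}--\eqref{eq:cond_2}, so I must verify that committing greedily to the first dominating point never forecloses a cheaper global arrangement; the boundary cases --- the first partition amortizing only $F$, and the last partition, whose endpoint $n$ can never be dominating and for which the cheaper of $\enc$ and $\bin$ is selected directly --- are the fiddly special cases to discharge.
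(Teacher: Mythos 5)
Your reduction is correct as far as it goes, and it is a genuinely different framing from the paper's: you collapse the two additive cost functions into the single gain function $g$, observe that $\sum_t \bin(S[p_{t-1},p_t))$ telescopes, and turn the minimization into maximizing $\Psi(\mathcal{P}) = \sum_t\big[\max(0,\Delta g_t) - F\big]$; the local computations (merging two same-sign segments gains $F$, the $F$ versus $2F$ thresholds at the boundary versus the interior) are all sound. But there is a genuine gap, and it sits exactly where the lemma lives: the claim to be proved is that the particular partition $\mathcal{P}$ defined by Equation~\eqref{eq:recursive} is a \emph{global} optimum, and that is precisely the step you label ``the hard part'' and leave as a plan (``I would prove by strong induction\dots I must verify that committing greedily\dots''). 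Exchange moves of the kind you describe only show that \emph{some} optimum can be rewritten into an alternating, extremal, above-threshold configuration; with a variable number of parts and the forward-looking conditions \eqref{eq:cond_1}--\eqref{eq:cond_2}, the statement ``no single merge, slide, or dissolve improves $\Psi$'' is local optimality, not global optimality, and you never show that the canonical form your moves reach coincides with the greedy left-to-right $\mathcal{P}$.

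The paper closes exactly this hole with a short exchange argument that your induction would have to reproduce: let $\mathcal{P}^*$ be an optimal solution sharing the longest common prefix with $\mathcal{P}$, let $i$ be the last common position, $u$ the dominating point chosen by $\mathcal{P}$ after $i$, and $(i,v)$ the edge taken by $\mathcal{P}^*$. After arguing that both edges must use the same encoder, point-wise additivity yields two opposing inequalities: optimality of $\mathcal{P}^*$ gives $\enc(u,v) \leq \bin(u,v) + F$, while $u$ being dominating gives $\enc(u,v) \geq \bin(u,v) + F$. Hence $\enc(u,v) = \bin(u,v) + F$, so $u$ can be spliced into $\mathcal{P}^*$ at zero cost change, contradicting the maximality of the common prefix. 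This is, in your vocabulary, the proof that committing to the first dominating point ``never forecloses a cheaper global arrangement''; since your proposal states this verification as a task rather than carrying it out, it does not yet prove Lemma~\ref{lem:optimality}, although your $\Psi$-reformulation would make that final argument quite clean if you executed it.
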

\begin{proof}
As already noted in Section~\ref{subsec:opt_part}, an optimal solution to the problem can be thought as a path of minimum cost in the DAG whose vertices are the positions of the integers of $S$ and $C(i,j)=C(S[i,j])$ for any edge $(i,j)$.
Thus, suppose that $\mathcal{P}$ is not a shortest path and let $\mathcal{P}^* = [p^*_1, \ldots, p^*_m]$ be the shortest path sharing the longest common prefix with $\mathcal{P}$.
Refer to Fig.~\ref{fig:proof} for a graphical representation: $i$ is the largest position shared by $\mathcal{P}^*$ and $\mathcal{P}$.
We want to show that we can replace the edge $(i, v)$, $v \in \mathcal{P}^*$, with the path $(i, u)(u, v)$, $u \in \mathcal{P}$, without changing the cost of $\mathcal{P}^*$, therefore extending the longest common prefix up to node $u < v$ (the case for $u > v$ is symmetric).
We argue that this is only possible if $\mathcal{P}$ is optimal, otherwise it would mean that $\mathcal{P}^*$ is not a shortest path sharing a common longest prefix with $\mathcal{P}$, which is absurd by assumption.


First note that both edges $(i, v)$ and $(i, u)$ must be encoded with the same compressor.
In fact, suppose that these are not, for example $(i, v)$ is encoded with $\bin$ and $(i, u)$ with $\enc$.
Since $v \in \mathcal{P}^*$, we know that it is optimal to encode with $\bin$ until $v$. However, the fact that $u$ is a dominating point for $\enc$ implies that $\bin(i, u) > \enc(i, u)$, which is absurd because $u < v$ and $\bin$ is optimal until $v$.
Therefore, both edges use the same encode. Assume that it is $\enc$ (the case for $\bin$ is symmetric).

The fact that $v$ belongs to the optimal solution $\mathcal{P}^*$ means that if we split the edge into two (or more) pieces, we \emph{cannot} decrease the cost, i.e., $\enc(i, v) \leq \enc(i, k) + \bin(k, v) + F$, $\forall i \leq k \leq v$.
Since $\enc$ is point-wise, we have $\enc(i, v) - \enc(i, k) = \enc(k, v)$ and thus, by imposing $k = u$, we obtain (1) $\enc(u, v) \leq \bin(u, v) + F$.
Vice versa, the fact that $u$ is a dominating point for $\enc$ means that from $u$ to $v$ the cost is \emph{higher} if we keep the same encoder, i.e., $\enc(i, v) \geq \enc(i, u) + \bin(u, v) + F$. Again, by exploiting the fact that $\enc$ is point-wise, we have (2) $\enc(u, v) \geq \bin(u, v) + F$.
Conditions (1) and (2) together imply that it must be $\enc(u, v) = \bin(u, v) + F$, thus we have no change in the cost of $\mathcal{P}^*$ by performing the exchange, which contradicts our assumption that $\mathcal{P}$ was not optimal.
\end{proof}

\begin{figure}[t]
\removelatexerror
\centering
\scalebox{1.0}{
    \input{optimal_partitioning.tex}
}
\caption{The \code{optimal\_partitioning} algorithm.
\label{alg:opt_split}}
\end{figure}

\begin{figure}[h]
\removelatexerror
\centering
\scalebox{1.0}{
    \input{update.tex}
}
\caption{The \code{update} algorithm.
\label{alg:update}}
\end{figure}

\begin{figure}[h]
\removelatexerror
\centering
\scalebox{1.0}{
    \input{close.tex}
}
\caption{The \code{close} algorithm.
\label{alg:close}}
\end{figure}

We are now left to present a detailed algorithm that computes $\mathcal{P}$, i.e., that iteratively finds all the dominating points of $S$ according to Equation~\ref{eq:recursive}.
We argue that the function \textsf{optimal\_partitioning} coded in Fig.~\ref{alg:opt_split} does the job.
Before proving that the algorithm is correct, let us explain the meaning of the variables used in the pseudocode.

Call $\ell$ the last added position to $\mathcal{P}$.
Variables $\var{i}$ and $\var{j}$ keep track of the positions of the points dominating the interval $S[\ell, \var{k})$ for, respectively, $\bin$ and $\enc$ encoders.
Likewise, $\var{max} = g(\var{i})$ and $\var{min} = g(\var{j})$ according to Definition~\ref{defn:gain}, with $\var{g}$ being the gain at step $\var{k}$.

\begin{lem}\label{lem:correctness}
The algorithm in Fig.~\ref{alg:opt_split} is \emph{correct}.
\end{lem}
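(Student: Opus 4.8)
The plan is to prove that the algorithm of Fig.~\ref{alg:opt_split} outputs exactly the partition $\mathcal{P}$ of Equation~\ref{eq:recursive}; correctness then follows at once from Lemma~\ref{lem:optimality}. The whole argument rests on a single loop invariant connecting the scalar state of the algorithm to the gain function of Definition~\ref{defn:gain}. Let $\ell$ denote the last position appended to $\mathcal{P}$ (with $\ell = 0$ before the first append). I would prove that, at the end of the iteration with index $\var{k}$, one has $\var{g} = g(\var{k}+1) - g(\ell)$, that $\var{max} = g(\var{i}) - g(\ell) = \max_{\ell < k' \leq \var{k}+1}\bigl(g(k') - g(\ell)\bigr)$ with $\var{i}$ the (latest) maximiser, symmetrically $\var{min} = g(\var{j}) - g(\ell) = \min_{\ell < k' \leq \var{k}+1}\bigl(g(k') - g(\ell)\bigr)$ with $\var{j}$ the (latest) minimiser, and $T = F$ while $\ell = 0$ and $T = 2F$ afterwards. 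The base case is immediate from the initialisation, and the two branches preserve it: whether $\var{g}$ is currently increasing ($\bin$ cheaper, so $\var{max}/\var{i}$ is the live candidate dominating point for $\bin$) or decreasing ($\enc$ cheaper, so $\var{min}/\var{j}$ is the live candidate for $\enc$), the branch refreshes exactly the corresponding running extremum and its position.

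Granting the invariant, I would show that a call to \func{update} is triggered precisely at the step where the tracked extremum becomes a dominating point in the sense of Definition~\ref{defn:dominating_point}. In the decreasing branch, for instance, the guard $\var{max} > T$ rewrites as $g(\var{i}) - g(\ell) > T$, which is the first requirement on $k^{\ast} = \var{i}$ relative to the left endpoint $\ell$ (read for $\bin$, whose definition is symmetric), while $\var{max} - \var{g} > 2F$ rewrites as $g(\var{i}) - g(\var{k}+1) > 2F$, i.e. condition~\ref{eq:cond_1} with right endpoint $\var{k}+1$. Hence \func{update} appends $\var{i}$ exactly when it dominates $[\ell, \var{k}+1)$, in agreement with the recursion of Equation~\ref{eq:recursive}; the increasing branch is symmetric and appends the $\enc$-dominating point $\var{j}$. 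The alternating property established before the lemma guarantees that consecutive appended points belong to opposite encoders, so after each append we correctly begin the search for the dominating point of the other encoder.

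The delicate point — and the step I expect to be the main obstacle — is verifying that \func{update} re-establishes the invariant with respect to the new reference $\ell' = \var{i}$ (resp. $\var{j}$). Subtracting the appended extremum from $\var{g}$ rebases the gain at $\ell'$, setting the just-used extremum to $0$ and seeding the opposite one to the rebased $\var{g}$ with position $\var{k}+1$; raising $T$ to $2F$ records that every later partition must amortise two switches. For this reseeding to be correct one must argue that, over $(\ell', \var{k}+1]$, the new opposite extremum is attained exactly at $\var{k}+1$. This is where monotonicity and the gating conditions combine: since $\var{i}$ is the latest maximiser, $\var{max} = g(\var{i}) - g(\ell)$ is fixed after $\var{i}$, and the firing occurs at the first position whose gain fell more than $2F$ below it; thus every intermediate position $k'$ with $\var{i} < k' \leq \var{k}$ satisfies $g(k') \geq g(\var{i}) - 2F > g(\var{k}+1)$, so $\var{k}+1$ is indeed the strict minimiser over $(\var{i}, \var{k}+1]$. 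I would carry this estimate through carefully, as it is the only non-mechanical ingredient.

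Finally, I would treat termination via \func{close}. Since the position $n$ can never be a dominating point (no change of gain occurs beyond the end of $S$), the in-loop $2F$ guards realise only condition~\ref{eq:cond_1}; the complementary case, condition~\ref{eq:cond_2} with right endpoint $j = n$, is exactly what \func{close} handles, its guards using the relaxed threshold $F$ to reflect that the last partition amortises a single switch. I would check that \func{close} finalises the remaining dominating point, if any, and then appends $n$ through the branch selecting the cheaper of $\enc$ and $\bin$ on the last partition, as prescribed for the final block. Concatenating, the list appended to $\mathcal{P}$ is precisely the dominating points followed by $n$ — that is, the solution of Equation~\ref{eq:recursive} — which by Lemma~\ref{lem:optimality} is optimal, so the algorithm is correct.
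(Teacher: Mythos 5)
Your proposal is correct and follows essentially the same route as the paper's proof: a loop invariant tying $\var{g}$, $\var{max}$, $\var{min}$, $\var{i}$, $\var{j}$ and $T$ to the gain function of Definition~\ref{defn:gain}, a verification that the in-loop guards implement Definition~\ref{defn:dominating_point} via Condition~\ref{eq:cond_1} while \func{close} implements Condition~\ref{eq:cond_2} with the single-switch threshold $F$, and the alternation of encoders after each \func{update}, concluding by Equation~\ref{eq:recursive} and Lemma~\ref{lem:optimality}. The only difference is one of explicitness: your first-firing argument showing that after \func{update} the opposite extremum is correctly reseeded at position $\var{k}+1$ spells out a step the paper dispatches in a single sentence ("the $\var{max}$ gain in $S[\var{j},\var{k})$ must be the current gain $\var{g}$, whereas the $\var{min}$ gain is $0$").
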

\begin{proof}
We want to show that the array $\mathcal{P}$ returned by the function \textsf{optimal\_partitioning} contains all the positions of the dominating points, as recursively described by Equation~\ref{eq:recursive}.
We proceed by induction on the elements of $\mathcal{P}$.

The main loop in lines 6-17:
\begin{enumerate}
\item computes the gain $\var{g}$ at step $\var{k}$ (line 7);
\item updates the variables $\var{i}$, $\var{max}$ (lines 9-10) and $\var{j}$, $\var{min}$ (lines 14-15);
\item add new positions to $\mathcal{P}$ (lines 11-12 and 16-17).
\end{enumerate}
Correctness of 1) and 2) is immediate: the crucial point to explain is the third.

The \textsf{if} statements in lines 11 and 16 check whether positions $\var{i}$ and $\var{j}$ are dominating $[\ell, \var{k})$, i.e., whether $S[\var{i}]$ and $S[\var{j}]$ satisfy Definition~\ref{defn:dominating_point}.
Since the \textsf{if} statements are symmetric, we proved the correctness of the first one for non-decreasing values of $\var{g}$ (line 11).

We first check whether the $\var{min}$ gain, as seen so far, is sufficient to be the one of a dominating point for $\enc$ as required by Equation~\ref{eq:dominating_point}. At the beginning of the algorithm, the current interval starts at $\var{i} = 0$ and $T = F$, therefore $g(0) = 0$ in Equation~\ref{eq:dominating_point} and the test $\var{min} < -T$ is correct.
If $\var{min} < -T$ is true, then we also check if we have a sufficiently large increase in gain at the current step $\var{k}$ with respect to the previously seen $\var{min}$ gain according to Condition~\ref{eq:cond_1}. 
Again, it is immediate to see that the test $\var{min} - \var{g} < -2F$ checks such condition and, therefore, it is correct.
If both previous conditions are satisfied, then $j$ is the position of the dominating point for $\enc$ in the first interval $S[0, \var{k})$ by Definition~\ref{defn:dominating_point}.
If so, we can execute the \textsf{update} code, shown in Fig.~\ref{alg:update}, which adds $\var{j}$ to $\mathcal{P}$ and sets $T$ to $2F$ according to Definition~\ref{defn:dominating_point}.
Moreover, it updates the gain $\var{g}$ to maintain the invariant that its value is always relative to the current interval, which now begins at position $\var{j}$. In fact: since we have seen an increase of $2F$ bits, the $\var{max}$ gain in $S[\var{j}, \var{k})$ must be the current gain $\var{g}$, whereas the $\var{min}$ gain is 0 because $\var{g}$ is non-decreasing.
Thus, the first point is computed correctly.

Now, assume that we have added $h$ points to $\mathcal{P}$ and that the last added is for encoder $\enc$.
We want to show that the next point will be dominating for encoder $\bin$.
As explained before, whenever we add a dominating point for $\enc$ to $\mathcal{P}$, it means that we have seen an increase of $2F$ bits with respect to the last added position, i.e., position $\var{k}+1$ satisfies Equation~\ref{eq:dominating_point} for encoder $\bin$. Therefore the $(h+1)$-th point added to $\mathcal{P}$ will be dominating for $\bin$.

To conclude, we have to explain what happens at the end of the algorithm. Refer to the \textsf{close} function, coded in Fig.~\ref{alg:close}.
Lines 2-3 (4-5) check Condition~\ref{eq:cond_2}:
if successful, then $\var{max}$ ($\var{min}$) is the next dominating point for $\bin$ ($\enc$) and, since compressors must alternate each other, we close the encoding of the sequence with the other compressor in lines 6-9, that is $\enc$ ($\bin$);
if both unsuccessful, i.e., no dominating point is found, then it means that the remaining part of the sequence should not be cut and, thus, encoded with a single compressor in lines 6-9.
\end{proof}

In conclusion, since we consider each element of $S$ once and use a constant number of variables, Lemma~\ref{lem:optimality} and~\ref{lem:correctness} imply the following result.

\begin{thm}
A sorted integer sequence of size $n$ can be partitioned \emph{optimally} in $\Theta(n)$ time and $O(1)$ space, whenever its partitions are represented with a \emph{point-wise} encoder and \emph{characteristic bit-vectors}.
\end{thm}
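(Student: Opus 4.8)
The plan is to assemble the Theorem from the two preceding lemmas together with a direct accounting of the algorithm's resource usage, so that the proof is essentially their conjunction plus a complexity tally. First I would invoke Lemma~\ref{lem:optimality}, which guarantees that the partitioning $\mathcal{P}$ obtained by cutting $S$ at its dominating points has minimum total encoding cost, i.e., it corresponds to a shortest path in the DAG $\mathcal{G}$ whose edge weights are the encoding costs $C(S[i,j))$. Then I would invoke Lemma~\ref{lem:correctness}, which certifies that the function \textsf{optimal\_partitioning} of Fig.~\ref{alg:opt_split} actually outputs this $\mathcal{P}$, since it emits exactly the positions satisfying the recurrence of Equation~\ref{eq:recursive}. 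Taken together, these two facts establish that the algorithm returns an optimal partitioning, and what remains is purely the resource analysis.

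For the time bound, I would argue that the single \textsf{for} loop (lines 6--17) performs exactly $n$ iterations, and that each iteration does a constant amount of work: one update of the running gain $\var{g}$ (line 7), a constant number of comparisons against the thresholds $T$, $2F$ and the running extrema $\var{min}$/$\var{max}$, and at most one call to \textsf{update}, which itself runs in $O(1)$. The crucial point is the \emph{point-wise} hypothesis: because $\enc(S[\var{k}])$ depends only on the value $S[\var{k}]$ and not on the universe or size of its partition, the incremental contribution $\enc(S[\var{k}]) - \bin(S[\var{k}])$ to the gain is computable in constant time per element and can simply be accumulated, so the cost of every integer is evaluated exactly once. The closing \textsf{close} call likewise runs in $O(1)$. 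Hence the total running time is $\Theta(n)$.

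For the space bound, I would observe that the working storage consists only of the scalar variables $\var{i}, \var{j}, \var{k}, \var{g}, \var{min}, \var{max}$ and the threshold $T$, i.e., $O(1)$ words, and that no per-position shortest-path information needs to be retained --- precisely the linear-space overhead that any dynamic-programming formulation over $\mathcal{G}$ would incur, as noted in the Overview. The output $\mathcal{P}$ is \emph{streamed}: each dominating position is appended as soon as it is discovered (via \textsf{update}) and never revisited, so it does not count against the working space.

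I expect the only genuinely delicate point to be the space accounting: specifically, making explicit that $\mathcal{P}$ is produced as an output stream rather than held in memory, and that it is exactly the point-wise property --- which lets the gain be maintained as a single running sum instead of recomputed over arbitrary intervals --- that removes the need to store intermediate per-position costs. The optimality and correctness claims themselves are already fully discharged by Lemmas~\ref{lem:optimality} and~\ref{lem:correctness}, so this resource tally is all that the Theorem additionally requires.
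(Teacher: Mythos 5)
Your proposal is correct and follows exactly the paper's own route: the paper derives the theorem as the immediate conjunction of Lemma~\ref{lem:optimality} (the dominating-point partitioning $\mathcal{P}$ is optimal) and Lemma~\ref{lem:correctness} (the algorithm of Fig.~\ref{alg:opt_split} computes $\mathcal{P}$), plus the observation that each element is considered once and only a constant number of variables is used. Your elaborations --- that the point-wise property makes each gain increment $O(1)$ and that $\mathcal{P}$ is emitted as an output stream rather than stored --- are just explicit statements of what the paper leaves implicit, so there is nothing to add.
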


\section{Experimental evaluation}\label{sec:experiments}

The aim of this section is the one of measuring the space improvement, indexing time and query processing speed of indexes that are optimally-partitioned by the algorithm described in Section~\ref{sec:optimal_splitting}.

\parag{Datasets}
We perform our experiments on the following standard datasets, whose statistics are summarized in Table~\ref{tab:datasets_stats}.
\begin{itemize}
\item {\gov} is the TREC 2004 Terabyte Track test collection, consisting in roughly $25$ million .gov sites crawled in early 2004. Documents are truncated to 256KB.
\item {\clue} is the ClueWeb 2009 TREC Category B test collection, consisting in roughly $50$ million English web pages crawled between January and February 2009.
\item {\cc} is an English subset of the freely available news from CommonCrawl\footnote{\url{http://commoncrawl.org/2016/10/news-dataset-available}}, consisting of articles crawled from $09/01/16$ to $30/03/18$.
\end{itemize}

\begin{table}[t]
    \centering
    \caption{Basic statistics for the tested collections.}
    \scalebox{1.0}{
        \begin{tabular}{l r r r}
\toprule

& {\gov} & {\clue}  & {\cc} \\

\midrule

Documents
  & \num{24622347}
  & \num{50131015}
  & \num{43530315}
  \\

Terms
  & \num{35636425}
  & \num{92094694}
  & \num{43844574}
  \\

Postings
  & \num{5742630292}
  & \num{15857983641}
  & \num{20150335440}  
  \\
  
\bottomrule
\end{tabular}
    }
\label{tab:datasets_stats}
\end{table}

\noindent
Identifiers were assigned to documents (docIDs) by following the lexicographic order of their URLs~\cite{2007:silvestri}.

\begin{table*}
    \centering
    \caption{The performance of the Variable-Byte family.
    }
    \scalebox{1.0}{
    \begin{tabular}{
                          l
                          c@{\hspace{6pt}}
                          c@{\hspace{6pt}}
                          c@{\hspace{6pt}}
                          c
                          c
                          c@{\hspace{6pt}}
                          c@{\hspace{6pt}}
                          c@{\hspace{6pt}}
                          c
                          c
                          c@{\hspace{6pt}}
                          c@{\hspace{6pt}}
                          c@{\hspace{6pt}}
                          c
                          }
\toprule
    
& \multicolumn{4}{c@{}@{}@{}@{}@{}}{\gov}
&
& \multicolumn{4}{c@{}@{}@{}@{}@{}}{\clue}
&
& \multicolumn{4}{c@{}@{}@{}@{}@{}}{\cc}
\\

\cmidrule(lr){2-5}
\cmidrule(lr){7-10}
\cmidrule(lr){12-15}

& {{\docs}}
& {{\freqs}}
& {\textsf{building}}
& {\textsf{AND query}}
&
& {{\docs}}
& {{\freqs}}
& {\textsf{building}}
& {\textsf{AND query}}
&
& {{\docs}}
& {{\freqs}}
& {\textsf{building}}
& {\textsf{AND query}}
\\

& {\textsf{[bpi]}}
& {\textsf{[bpi]}}
& {\textsf{[min]}}
& {\textsf{[ms]}}
&
& {\textsf{[bpi]}}
& {\textsf{[bpi]}}
& {\textsf{[min]}}
& {\textsf{[ms]}}
&
& {\textsf{[bpi]}}
& {\textsf{[bpi]}}
& {\textsf{[min]}}
& {\textsf{[ms]}}
\\

\midrule

\textsf{Varint-GB}
& $11.15$
& $9.77$
& $10.60$
& $0.88$
&
& $11.43$
& $9.80$
& $46.50$
& $5.32$
&
& $11.12$
& $10.01$
& $58.40$
& $7.38$
\\

\textsf{Varint-G8IU}
& $10.43$
& $9.00$
& $18.00$
& $0.84$
&
& $10.84$
& $8.99$
& $65.80$
& $5.10$
&
& $10.23$
& {\pp}$8.93$
& $60.60$
& $6.93$
\\

\textsf{Masked-VByte}
& {\pp}$9.53$
& $8.02$
& $10.50$
& $0.90$
&
& {\pp}$9.91$
& $8.01$
& $45.50$
& $5.52$
&
& {\pp}$9.42$
& {\pp}$8.00$
& $60.40$
& $7.06$
\\

\textsf{Stream-VByte}
& $11.15$
& $9.77$
& $10.60$
& $0.86$
&
& $11.43$
& $9.80$
& $46.50$
& $5.30$
&
& $11.12$
& $10.01$
& $58.40$
& $7.06$
\\

\bottomrule
\end{tabular}

    }
    \label{tab:VB_family}
\end{table*}

\parag{Experimental setting and methodology}
All the experiments were run on a machine with 4 Intel i7-4790K CPUs (8 threads) clocked at 4.00 GHz and with 32 GB of RAM DDR3, running Linux 4.13.0 (Ubuntu 17.10), 64 bits.
The implementation of our partitioned indexes is in standard C++14 and it is a flexible template library allowing \emph{any} point-wise encoder to be used, provided that its interface exposes a method to compute the cost in bits of a single integer in constant time.
We based our implementation on the popular \textsf{ds2i}\footnote{\url{https://github.com/ot/ds2i}} project.
The source code, available at \url{https://github.com/jermp/opt_vbyte}
to favour further research and reproducibility of results,
was compiled with \textsf{gcc} 7.2.0 using the highest optimization setting (compilation flags \texttt{-march=native} and \texttt{-O3}).

To test the building time of the indexes we measure the time needed to perform the whole end-to-end process, that is: (1) fetch the inverted lists from disk to main memory; (2) encode them in main memory; (3)
save the whole index data structure back to a file on disk.
Since the process is mostly I/O bound, we make sure to avoid disk caching effects by clearing the disk cache before building the indexes.

To test the query processing speed of the indexes, we memory map the index data structures on disk and compute boolean conjunctions over a set of random queries drawn from \textsf{TREC-05} and \textsf{TREC-06} Efficiency Track topics.
We repeat each experiment three times to smooth fluctuations in the measurements and report the average. The query algorithm runs on a single core and timings are reported in milliseconds per query.

In all the experiments, we used the value $F = 64$ bits for partitioning the inverted lists for both $\vb$
and partitioned Elias-Fano (henceforth, \textsf{PEF}).

\parag{Organization of the experiments}
Since we adopt \textsf{VByte} as example point-wise encoder, the next section compares the performance of all the encoders in the \textsf{VByte} family in order to choose the most convenient for the subsequent experiments.
Then, we measure the benefits of applying our optimization algorithm on the chosen \textsf{VByte} encoder, by comparing the corresponding partitioned index against the un-partitioned counterpart.
Finally, we compare our proposal with many other inverted index representations.

\subsection{The Variable-Byte family}
Several $\vb$ variants have been proposed in the literature, each with a different stream organization.
We now discuss them and inspect their performance.

By assuming that the largest represented integer fits into 4 bytes, two bits are sufficient to describe the proper number of bytes needed to represent an integer.
In this way, groups of four integers require one control byte that has to be read once as a header information.
This optimization was introduced in Google's Varint-GB~\cite{2009:dean} and reduces the probability of a branch misprediction which, in turn, leads to higher instruction throughput.
As already mentioned in Section~\ref{sec:intro}, working with byte-aligned codes also opens the possibility of exploiting the parallelism of SIMD (Single Instruction Multiple Data) instructions of modern processors to further enhance the decoding speed.
This is the line of research taken by the recent proposals that we overview below.

Varint-G8IU~\cite{2011:stepanov.gangolli.ea} uses a similar idea to the one of Varint-GB but it fixes the number of compressed bytes rather than the number of integers: one control byte is used to describe a variable number of integers in a data segment of exactly 8 bytes, therefore each group can contain between two and eight compressed integers.
Masked-VByte~\cite{2015:plaisance.kurz.ea} directly works on the original $\vb$ format. The decoder first gathers the most significant bits of consecutive bytes using a dedicated SIMD instruction. Then, using previously-built lookup tables and a shuffle instruction, the data bytes are permuted to obtain the original integers.
Stream-VByte~\cite{2018:lemire.kurz.ea}, instead, separates the encoding of the control bytes from the data bytes, by writing them into separate streams.
This organization permits to decode multiple control bytes simultaneously and, therefore, reduce branch mispredictions that can stop the CPU pipeline execution when decoding the data stream.

 \begin{table*}
    \centering
    \caption{Space in average number of bits (\textsf{bpi}) per document ({\docs}) and frequency ({\freqs}).}
    \scalebox{1.0}{
    \begin{tabular}{
                          l
                          r@{\hspace{1pt}}
                          r@{\hspace{5pt}}
                          r@{\hspace{1pt}}
                          r
                          c
                          r@{\hspace{1pt}}
                          r@{\hspace{5pt}}
                          r@{\hspace{1pt}}
                          r
                          c
                          r@{\hspace{1pt}}
                          r@{\hspace{5pt}}
                          r@{\hspace{1pt}}
                          r
}
\toprule

& \multicolumn{4}{c@{}@{}@{}@{}@{}@{}}{\gov}
&
& \multicolumn{4}{c@{}@{}@{}@{}@{}@{}}{\clue}
&
& \multicolumn{4}{c@{}@{}@{}@{}@{}@{}}{\cc}
\\

\cmidrule(lr){2-5}
\cmidrule(lr){7-10}
\cmidrule(lr){12-15}

& \multicolumn{2}{@{}c@{}}{{\docs}}
& \multicolumn{2}{@{}c@{}}{{\freqs}}
&
& \multicolumn{2}{@{}c@{}}{{\docs}}
& \multicolumn{2}{@{}c@{}}{{\freqs}}
&
& \multicolumn{2}{@{}c@{}}{{\docs}}
& \multicolumn{2}{@{}c@{}}{{\freqs}}
\\

& \multicolumn{2}{@{}c@{}}{\textsf{[bpi]}}
& \multicolumn{2}{@{}c@{}}{\textsf{[bpi]}}
&
& \multicolumn{2}{@{}c@{}}{\textsf{[bpi]}}
& \multicolumn{2}{@{}c@{}}{\textsf{[bpi]}}
&
& \multicolumn{2}{@{}c@{}}{\textsf{[bpi]}}
& \multicolumn{2}{@{}c@{}}{\textsf{[bpi]}}
\\

\midrule

\textsf{VByte}
& $9.53$ & \color{DarkGray}{$(+95.7\%)$}
& $8.02$ & \color{DarkGray}{$(+163.9\%)$}
&
& $9.90$ & \color{DarkGray}{$(+51.5\%)$}
& $8.01$ & \color{DarkGray}{$(+222.4\%)$}
&
& $9.42$ & \color{DarkGray}{$(+37.4\%)$}
& $8.00$ & \color{DarkGray}{$(+234.8\%)$}
\\

\textsf{VByte unif.}
& $5.41$ & \color{DarkGray}{$(+11.1\%)$}
& $3.31$ & \color{DarkGray}{$(+8.9\%)$}
&
& $7.37$ & \color{DarkGray}{$(+12.7\%)$}
& $2.69$ & \color{DarkGray}{$(+8.5\%)$}
&
& $7.27$ & \color{DarkGray}{$(+6.1\%)$}
& $2.55$ & \color{DarkGray}{$(+6.5\%)$}
\\

\textsf{VByte $\epsilon$-opt.}
& $4.93$ & \color{DarkGray}{$(+1.2\%)$}
& $3.05$ & \color{DarkGray}{$(+0.5\%)$}
&
& $6.66$ & \color{DarkGray}{$(+1.8\%)$}
& $2.50$ & \color{DarkGray}{$(+0.7\%)$}
&
& $6.92$ & \color{DarkGray}{$(+1.0\%)$}
& $2.41$ & \color{DarkGray}{$(+1.0\%)$}
\\

\textsf{VByte opt.}
& ${4.87}$ &
& ${3.04}$ &
&
& ${6.54}$ &
& ${2.48}$ &
&
& ${6.85}$ &
& ${2.39}$ &
\\

\bottomrule
\end{tabular}
    }
\label{tab:VB_partitioned_unpartitioned.space}
\end{table*}

\parag{The performance}
To help us in deciding which $\vb$ encoder to choose for our subsequent analysis, we consider the Table~\ref{tab:VB_family}.
The data reported in the table illustrates how different $\vb$ stream organizations actually impact index space.
Since \textsf{Varint-GB} and \textsf{Stream-VByte} take exactly the same space, given that \textsf{Stream-VByte} stores the very same control and data bytes but concatenated in separate streams, in the following we refer to both versions as \textsf{Varint-GB}.
As we can see, the original $\vb$ format (referred to as \textsf{Masked-VByte} in Table~\ref{tab:VB_family} because it uses this algorithm to perform decoding) is the most space-efficient among the family.
This is no surprise given the distribution plotted in Fig.~\ref{fig:postings_distr}: it means that the majority of the encoded $d$-gaps falls in the interval $[0,2^7)$, otherwise the compression ratio of $\vb$ would have been worse than the one of \textsf{Varint-GB} and \textsf{Varint-G8IU}.
As an example, consider the sequence of $d$-gaps $\langle 132, 233, 246, 178 \rangle$. $\vb$ uses 8 bytes to represent such sequence, whereas \textsf{Varint-GB} uses 1 control byte and 4 data bytes, thus 5 bytes in total. When all values are in $[0,2^7)$, $\vb$ uses 4 bytes instead of 5 as needed by \textsf{Varint-GB}.
For this reason, the space usage for \textsf{Varint-GB} and \textsf{Varint-G8IU} is larger than the one of $\vb$: it is $16 \div 18\%$ larger for \textsf{Varint-GB}; $10\%$ larger for \textsf{Varint-G8IU}.
The control byte of \textsf{Varint-G8IU} stores a bit for every of the 8 data bytes: a 0 bit means that the corresponding byte completes a decoded integer, whereas a 1 bit means that the byte is part of a decoded integer or it is wasted. Thus, \textsf{Varint-G8IU} compress worse than plain $\vb$ due to the wasted bytes.
Finally notice that \textsf{Varint-GB} is slightly worse than \textsf{Varint-G8IU} because it uses 10 bits per integer instead of 9 for all integers in $[0, 2^8)$. In fact, the difference between these two encoders in less than 1 bit on the tested datasets.

The speed of the encoders is actually very similar for all alternatives. We used the \textsf{TREC-05} querylog to compute boolean conjunctions. The spread between the fastest (\textsf{Varint-G8IU}) and the slowest alternative (\textsf{Masked-VByte}) is as little as $6 \div 10 \%$. The same holds true for the building of the indexes where, as expected, the plain $\vb$ is the fastest and \textsf{Varint-G8IU} is slower (by up to $40\%$ on {\gov} and {\clue}).

In conclusion, for the reasons discussed above, i.e., better space occupancy, better index building time and competitive speed, we adopt the original $\vb$ stream organization and the \textsf{Masked-VByte} algorithm by Plaisance, Kurz and Lemire~\cite{2015:plaisance.kurz.ea} to perform sequential decoding.

\subsection{Optimized Variable-Byte indexes}
In this section, we evaluate the impact of our solution by comparing the optimally-partitioned $\vb$ indexes against the un-partitioned indexes and the ones obtained by using other partitioning strategies, like uniform and the {$\epsilon$-optimal} based on dynamic programming (see Section~\ref{subsec:opt_part}).

As a high-level overview, the result of the comparison shows that our optimally-partitioned $\vb$ indexes are
$2\times$ smaller than the original, un-partitioned, counterparts; 
can be built $2\times$ faster without resorting on dynamic programming and offer the strongest guarantee, i.e., an exact solution rather than an approximation;
despite the significant space savings, these are as fast as the original $\vb$ indexes.

\parag{Index space}
Table~\ref{tab:VB_partitioned_unpartitioned.space} shows the results concerning the space of the indexes.
Compared to the case of un-partitioned indexes, we observe gains ranging from $37\%$ up to $235\%$, with a net factor of $2\times$ improvement with respect to the original $\vb$ format.

For the {uniform} partitioning we used partitions of 128 integers, for both documents and frequencies.
As we can see, this simple strategy already produces significant space savings: it is $43\%$, $26\%$ and $23\%$ better on the {\docs} sequences for $\gov$, $\clue$ and $\cc$ respectively; $59\%$, $66\%$ and $70\%$ better on the {\freqs} sequences.
This is because most $d$-gaps are actually very small but \emph{any} un-partitioned $\vb$ encoder needs at least 8 bits per $d$-gap.
In fact, notice how the average bits per integer on both {\docs} and {\freqs} becomes sensibly less than 8.

We recall that the $\epsilon$-optimal algorithm based on dynamic programming and reviewed in Section~\ref{subsec:opt_part}, was originally proposed for Elias-Fano~\cite{2014:ottaviano.venturini}, whose cost in bits can be computed in $O(1)$: we adapt the dynamic programming recurrence in order to use it for $\vb$ too.
As approximation parameters we used the same values as used in the experiments of the original work~\cite{2014:ottaviano.venturini}, i.e., we set $\epsilon_1 = 0.03$ and $\epsilon_2 = 0.3$.
The computed approximation could be possibly made large by enlarging such parameters, whereas our algorithm finds an exact solution.
However, we notice that the approximation is good and our optimal solution is only slightly better (by $1 \div 1.8\%$).
Compared to {uniform}, the {optimal} partitioning pays off: indeed it produces a further saving of $10\%$ on average, thus confirming the need for an optimization algorithm.

\parag{Index building time}
Although the un-partitioned variant would be the fastest to build in internal memory because the inverted lists are compressed in the same pass in which these are read from disk, the serialization of the data structure
imposes a considerable overhead because of the high memory footprint of the un-partitioned index. Notice how this factor becomes dramatic for the (larger) dataset $\clue$ and $\cc$, resulting in an end-to-end overhead of $50 \div 70\%$.
Because of this, also observe that there is no appreciable difference between the indexing time of the simple uniform
strategy and the optimal one.
Despite the linear-time complexity as soon as $\epsilon$ is constant, the $\epsilon$-{optimal} solution has a noticeable CPU cost due to the high constant factor, as we motivated in Section~\ref{subsec:opt_part}.
The {optimal} solutions has instead low constant factors and, as a result, is faster than the dynamic programming approach by more than $2.6\times$ on average on both $\gov$ and $\clue$; by $1.7\times$ on {\cc}.

\begin{table}
    \centering
    \caption{Index building timings in minutes.}
    \scalebox{1.0}{
    \begin{tabular}{
                          l@{\hspace{3pt}}
                          r@{\hspace{1pt}}
                          r@{\hspace{1pt}}
                          c@{\hspace{3pt}}
                          r@{\hspace{1pt}}
                          r@{\hspace{1pt}}
                          c@{\hspace{3pt}}
                          r@{\hspace{1pt}}
                          r@{\hspace{1pt}}
                          }
\toprule
    
& \multicolumn{2}{l@{}@{}}{\gov}
&
& \multicolumn{2}{l@{}@{}}{\clue}
&
& \multicolumn{2}{l@{}@{}}{\cc}
\\

\midrule
\textsf{VByte} & $10.1$ & \color{DarkGray}{$(-4\%)$} & & $43.3$ & \color{DarkGray}{$(+52\%)$} & & $60.4$ & \color{DarkGray}{$(+70\%)$} \\\textsf{VByte unif.} & $11.3$ & \color{DarkGray}{$(+8\%)$} & & $29.3$ & \color{DarkGray}{$(+3\%)$} & & $34.9$ & \color{DarkGray}{$(-2\%)$} \\\textsf{VByte $\epsilon$-opt.} & $26.7$ & \color{DarkGray}{$(+154\%)$} & & $72.3$ & \color{DarkGray}{$(+154\%)$} & & $59.8$ & \color{DarkGray}{$(+68\%)$} \\

\textsf{VByte opt.}
& ${10.5}$ &
&
& ${28.5}$ &
&
& ${35.5}$ &
\\

\bottomrule
\end{tabular}
    }
    \label{tab:VB_partitioned_unpartitioned.building_timings}
\end{table}

\begin{table}
    \centering
    \caption{Timings for AND queries in milliseconds.}
    \scalebox{1.0}{
    \begin{tabular}{
                          l@{\hspace{5pt}}
                          l@{\hspace{5pt}}
                          r@{\hspace{2pt}}
                          r@{\hspace{10pt}}
                          r@{\hspace{2pt}}
                          r@{\hspace{10pt}}
                          r@{\hspace{2pt}}
                          r@{\hspace{2pt}}
                          }
\toprule

&
& \multicolumn{2}{l@{}@{}}{\gov}
& \multicolumn{2}{l@{}@{}}{\clue}
& \multicolumn{2}{l@{}@{}}{\cc}
\\

\midrule

\multirow{4}{*}{\rotatebox[origin=c]{90}{{\fontsize{3mm}{3mm}\selectfont \sf TREC-05}}}
&  \textsf{VByte} & $0.90$ & \color{DarkGray}{$(+1\%)$} & $5.56$ & \color{DarkGray}{$(-3\%)$} & $7.06$ & \color{DarkGray}{$(+10\%)$} \\&  \textsf{VByte unif.} & $0.94$ & \color{DarkGray}{$(+5\%)$} & $5.90$ & \color{DarkGray}{$(+3\%)$} & $7.20$ & \color{DarkGray}{$(+13\%)$} \\&  \textsf{VByte $\epsilon$-opt.} & $0.92$ & \color{DarkGray}{$(+3\%)$} & $5.89$ & \color{DarkGray}{$(+3\%)$} & $6.52$ & \color{DarkGray}{$(+2\%)$} \\
&
\textsf{VByte opt.}
& ${0.89}$ &
& ${5.70}$ &
& ${6.39}$ &
\\

\midrule

\multirow{4}{*}{\rotatebox[origin=c]{90}{{\fontsize{3mm}{3mm}\selectfont \sf TREC-06}}}
&  \textsf{VByte} & $2.12$ & \color{DarkGray}{$(+0\%)$} & $8.35$ & \color{DarkGray}{$(-7\%)$} & $9.36$ & \color{DarkGray}{$(+12\%)$} \\&  \textsf{VByte unif.} & $2.22$ & \color{DarkGray}{$(+5\%)$} & $9.02$ & \color{DarkGray}{$(+1\%)$} & $9.58$ & \color{DarkGray}{$(+14\%)$} \\&  \textsf{VByte $\epsilon$-opt.} & $2.24$ & \color{DarkGray}{$(+6\%)$} & $9.17$ & \color{DarkGray}{$(+2\%)$} & $8.56$ & \color{DarkGray}{$(+2\%)$} \\
&
\textsf{VByte opt.}
& ${2.12}$ &
& ${8.96}$ &
& ${8.38}$ &
\\

\bottomrule
\end{tabular}
    }
    \label{tab:VB_partitioned_unpartitioned.query_timings}
\end{table}

%
%

\begin{table*}[t]
    \centering
    \caption{Space in average number of bits (\textsf{bpi}) per document ({\docs}) and frequency ({\freqs}).}
    \scalebox{1.0}{
    \begin{tabular}{
                          l
                          r@{\hspace{1pt}}
                          r@{\hspace{5pt}}
                          r@{\hspace{1pt}}
                          r
                          c
                          r@{\hspace{1pt}}
                          r@{\hspace{5pt}}
                          r@{\hspace{1pt}}
                          r
                          c
                          r@{\hspace{1pt}}
                          r@{\hspace{5pt}}
                          r@{\hspace{1pt}}
                          r
}
\toprule

& \multicolumn{4}{c@{}@{}@{}@{}@{}@{}}{\gov}
&
& \multicolumn{4}{c@{}@{}@{}@{}@{}@{}}{\clue}
&
& \multicolumn{4}{c@{}@{}@{}@{}@{}@{}}{\cc}
\\

\cmidrule(lr){2-5}
\cmidrule(lr){7-10}
\cmidrule(lr){12-15}

& \multicolumn{2}{@{}c@{}}{{\docs}}
& \multicolumn{2}{@{}c@{}}{{\freqs}}
&
& \multicolumn{2}{@{}c@{}}{{\docs}}
& \multicolumn{2}{@{}c@{}}{{\freqs}}
&
& \multicolumn{2}{@{}c@{}}{{\docs}}
& \multicolumn{2}{@{}c@{}}{{\freqs}}
\\

& \multicolumn{2}{@{}c@{}}{\textsf{[bpi]}}
& \multicolumn{2}{@{}c@{}}{\textsf{[bpi]}}
&
& \multicolumn{2}{@{}c@{}}{\textsf{[bpi]}}
& \multicolumn{2}{@{}c@{}}{\textsf{[bpi]}}
&
& \multicolumn{2}{@{}c@{}}{\textsf{[bpi]}}
& \multicolumn{2}{@{}c@{}}{\textsf{[bpi]}}
\\

\midrule

\textsf{PEF $\epsilon$-opt.}
& $4.10$ & \color{DarkGray}{$(-15.7\%)$}
& $2.38$ & \color{DarkGray}{$(-21.8\%)$}
&
& $5.85$ & \color{DarkGray}{$(-10.6\%)$}
& $2.20$ & \color{DarkGray}{$(-11.6\%)$}
&
& $5.84$ & \color{DarkGray}{$(-14.8\%)$}
& $2.18$ & \color{DarkGray}{$(-8.9\%)$}
\\

\textsf{OptPFD}
& $4.48$ & \color{DarkGray}{$(-8.0\%)$}
& $2.38$ & \color{DarkGray}{$(-21.8\%)$}
&
& $6.18$ & \color{DarkGray}{$(-5.4\%)$}
& $2.41$ & \color{DarkGray}{$(-2.9\%)$}
&
& $6.41$ & \color{DarkGray}{$(-6.5\%)$}
& $2.53$ & \color{DarkGray}{$(+5.9\%)$}
\\

\textsf{BIC}
& $3.80$ & \color{DarkGray}{$(-22.0\%)$}
& $2.14$ & \color{DarkGray}{$(-29.5\%)$}
&
& $5.15$ & \color{DarkGray}{$(-21.3\%)$}
& $1.87$ & \color{DarkGray}{$(-24.8\%)$}
&
& $5.37$ & \color{DarkGray}{$(-21.7\%)$}
& $1.98$ & \color{DarkGray}{$(-17.3\%)$}
\\

\textsf{ANS}
& $3.96$ & \color{DarkGray}{$(-18.7\%)$}
& $1.85$ & \color{DarkGray}{$(-39.0\%)$}
&
& $5.36$ & \color{DarkGray}{$(-18.0\%)$}
& $1.94$ & \color{DarkGray}{$(-21.9\%)$}
&
& $5.76$ & \color{DarkGray}{$(-16.0\%)$}
& $2.01$ & \color{DarkGray}{$(-15.8\%)$}
\\

\textsf{QMX}
& $6.00$ & \color{DarkGray}{$(+23.3\%)$}
& $3.37$ & \color{DarkGray}{$(+10.8\%)$}
&
& $8.01$ & \color{DarkGray}{$(+22.6\%)$}
& $3.75$ & \color{DarkGray}{$(+51.2\%)$}
&
& $7.31$ & \color{DarkGray}{$(+6.6\%)$}
& $3.72$ & \color{DarkGray}{$(+55.5\%)$}
\\

\textsf{VByte opt.}
& ${4.87}$ &
& ${3.04}$ &
&
& ${6.54}$ &
& ${2.48}$ &
&
& ${6.85}$ &
& ${2.39}$ &
\\

\bottomrule
\end{tabular}
    }
\label{tab:VB_comparison.space}
\end{table*}

\begin{figure}[t]
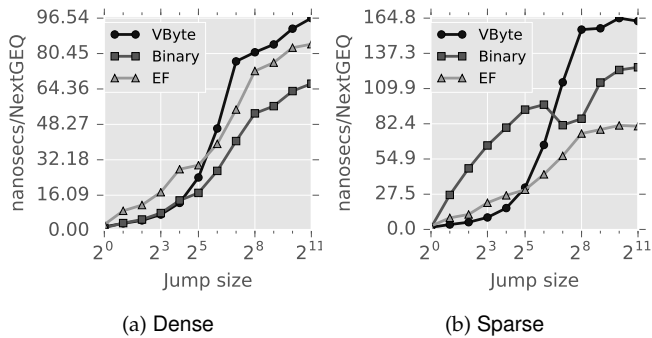

    \hspace*{-0.2cm}    
    \subfloat[\textsf{Dense}]{
    \includegraphics[scale=0.7]{{{plots/gov2.next_geqs.dense.bw}}}
    \label{fig:next_geqs.dense}
    }
    \hspace*{-0.5cm}
    \subfloat[\textsf{Sparse}]{
    \includegraphics[scale=0.7]{{{plots/gov2.next_geqs.sparse.bw}}}
     \label{fig:next_geqs.sparse}
    }
    \caption{Average nanoseconds spent per \textsf{NextGEQ} query for (a) \textsf{Dense} and (b) \textsf{Sparse} sequences of one million integers.
}
     \label{fig:next_geqs}
\end{figure}

\begin{figure}[t]
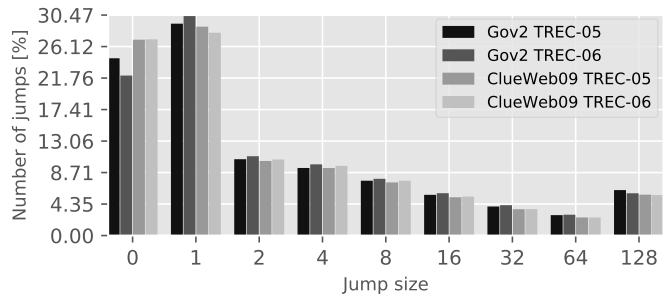

	\hspace*{-0.2cm}
    \includegraphics[scale=0.75]{{{plots/jumps.bw}}}
    \caption{When the difference between two consecutively accessed positions is $d$, \textsf{NextGEQ} is said to make a jump of size $d$.
The distribution of the jump sizes is divided into buckets of exponential size: all sizes between $2^{d-1}$ and $2^d$ belong to bucket $d$.
The plot shows the jumps distribution, in percentage, for the querylogs used in the experiments, when performing AND queries.}
    \label{fig:jumps}
\end{figure}

\parag{Index speed}
Table~\ref{tab:VB_partitioned_unpartitioned.query_timings} illustrates the results.
The striking result of the experiment is that, despite the significant space reduction ($2\times$ improvement, see Table~\ref{tab:VB_partitioned_unpartitioned.space}), the partitioned indexes are as fast as the un-partitioned ones on all datasets.

Therefore, it is important to provide a careful explanation of such result.
The answer is provided by understanding the plots in Fig.~\ref{fig:next_geqs}, along with the ones in Fig.~\ref{fig:postings_distr} and~\ref{fig:jumps}, that we generate for {\gov} and {\clue} since we obtained even better timing results for {\cc} (see Table~\ref{tab:VB_partitioned_unpartitioned.query_timings}).
In particular, Fig.~\ref{fig:next_geqs} illustrates the average nanoseconds spent per \textsf{NextGEQ} (Next Greater-than or EQual-to) query by $\vb$, the binary vector representation and Elias-Fano (\textsf{EF}).
The $\textsf{NextGEQ}_t(x)$ query returns the \emph{smallest} integer $z \geq x$ from the inverted list of the term $t$ and it is the core operation needed to perform fast intersection~\cite{EBDT2018}.
The timings reported in Fig.~\ref{fig:next_geqs} are relative to a sequence of one million integers and with an average gap between the integers of: (a) 2.5, as a \emph{dense} case and (b) 1850 as a \emph{sparse} case.
These values mimic the ones for the $\gov$ dataset, that are 2.13 and 1852 respectively. The ones for the $\clue$ dataset are 2.14 and 963, thus the plots have a similar shape.

As the dense case illustrates, the binary vector representation is as fast as \textsf{VByte} for all jumps of entity less then or equal to 8, and becomes actually faster for longer jumps.
Moreover, the distribution of the jump sizes plotted in Fig.~\ref{fig:jumps} indicates that, whenever executing AND queries, the number of jumps of size less than 16 accounts for $\approx$90$\%$ of the jumps performed by \textsf{NextGEQ}.
Furthermore, the distribution plotted in Fig.~\ref{fig:postings_distr} tells us that the majority of blocks are actually encoded with their characteristic bit-vector, thus explaining why the partitioned indexes exhibit no penalty against the un-partitioned counterparts.

However, $\vb$ tends to be slower on longer jumps because of its block-wise organization: since a posting list is split into blocks of 128 postings that are encoded separately, a block must be completely decoded even for accessing a single integer, which is not uncommon for boolean conjunctions.
Moreover, since $d$-gaps values are encoded, we need to access the elements by a linear scan of the block after decoding in order to compute their prefix sums.
When the accessed elements per block are very few, even using SIMD instructions to perform decoding results in a slower query execution.
Conversely and as expected, the binary vector representation is inefficient for the \emph{sparse} regions since potentially many bits need to be scanned to perform a query, but still faster than $\vb$ whenever the jump size becomes larger than 64 because it allows skipping over the bit stream by keeping samples of the bit set positions.

%

\subsection{Overall comparison}
In this section we compare the optimally-partitioned $\vb$ indexes against several competitors:
\begin{itemize}
\item the $\epsilon$-optimal partitioned Elias-Fano method (\textsf{PEF}) by Ottaviano and Venturini~\cite{2014:ottaviano.venturini};
\item the Binary Interpolative coding (\textsf{BIC}) by Moffat and Stuiver~\cite{2000:moffat.stuiver};
\item the optimized PForDelta (\textsf{OptPDF}) by Yan \emph{et al.}~\cite{2009:yan.ding.ea};
\item the \textsf{ANS}-based index by Moffat and Petri~\cite{ANS2};
\item the \textsf{QMX} mechanism by Trotman~\cite{2014:trotman}.
\end{itemize}

For all competitors, we used the {C++} code from the original authors, compiled with \textsf{gcc} 7.2.0 using the highest optimization setting
as we did for our own code to ensure a fair comparison.


\parag{Index space and building time}
Table~\ref{tab:VB_comparison.space} shows the results concerning the space of the indexes.
Clearly, the space usage of the $\vb$ {optimal} indexes is higher than the one of the bit-aligned encoders: this was expected since $\vb$ is byte-aligned.
However, the important result is that \emph{its space is not so high as it used to be before}.
In fact, comparing the results reported in Table~\ref{tab:VB_partitioned_unpartitioned.space} with the ones in Table~\ref{tab:VB_comparison.space}, we see that, without partitioning, $\vb$ was $172\%$ larger than \textsf{PEF} and $194\%$ larger than \textsf{BIC} on $\gov$; $123\%$ larger than \textsf{PEF} and $154\%$ larger than \textsf{BIC} on $\clue$;
$117\%$ larger than \textsf{PEF} and $137\%$ larger than \textsf{BIC} on $\cc$.
Now, thanks to our optimization strategy, this gap is reduced to $20\%$ on average.
In particular, notice that it is less than $11\%$ larger than \textsf{PEF} on the {\docs} sequences of {\clue}, while it is generally less effective on the {\freqs} sequences. This is because the within-document frequencies are made up of integers smaller than docIDs.
Very similar considerations hold for the other alternatives, such as \textsf{OptPFD} and \textsf{ANS}.
In particular, we notice that on the $\clue$ dataset, the difference between $\vb$ {optimal} and \textsf{OptPDF} is very small (only $4\%$ overall);
\textsf{BIC} is (as usual) generally better than other methods on the both {\docs} and {\freqs};
the byte-aligned \textsf{QMX} is, instead, significantly larger, by $19 \div 31\%$.

\return
We now consider the time needed to build the indexes.
Refer to Table~\ref{tab:VB_comparison.building_timings}.
As already noted in the previous subsection, the dynamic programming approach used for \textsf{PEF} imposes a severe penalty with respect to \textsf{VByte optimal} of $4\times$ on average.
The penalty is due to not only the difference in speed between dynamic programming and the algorithm devised in Section~\ref{sec:optimal_splitting}, but also to the fact that Elias-Fano, being bit-aligned, is slower to encode with respect to $\vb$.
Except for the \textsf{ANS} indexes which are slower to build, by $33\%$ on average, because of the two-pass process of first collecting symbol occurrence counts and, then, encoding~\cite{ANS2}, the building timings for the other competitors are, instead, competitive:
our optimization algorithm only takes a couple of minutes more overall the whole building process.
Only \textsf{BIC} and \textsf{QMX} took less indexing time ($33\%$ faster on average on {\gov} and {\clue}, but only $16\%$ more on the largest {\cc} dataset).

\begin{table}
    \caption{Index building timings in minutes.}
    \centering
    \scalebox{1.0}{
    \begin{tabular}{
                          l@{\hspace{3pt}}
                          r@{\hspace{1pt}}
                          r@{\hspace{1pt}}
                          c@{\hspace{3pt}}
                          r@{\hspace{1pt}}
                          r@{\hspace{1pt}}
                          c@{\hspace{3pt}}
                          r@{\hspace{1pt}}
                          r@{\hspace{1pt}}
                          }
\toprule

& \multicolumn{2}{l@{}@{}}{\gov}
&
& \multicolumn{2}{l@{}@{}}{\clue}
&
& \multicolumn{2}{l@{}@{}}{\cc}
\\

\midrule

\textsf{PEF $\epsilon$-opt.} & $41.3$ & \color{DarkGray}{$(+293\%)$} & & $125.5$ & \color{DarkGray}{$(+340\%)$} & & $85.2$ & \color{DarkGray}{$(+140\%)$} \\\textsf{OptPFD} & $8.2$ & \color{DarkGray}{$(-22\%)$} & & $25.8$ & \color{DarkGray}{$(-9\%)$} & & $36.7$ & \color{DarkGray}{$(+3\%)$} \\\textsf{BIC} & $7.0$ & \color{DarkGray}{$(-33\%)$} & & $20.5$ & \color{DarkGray}{$(-28\%)$} & & $28.3$ & \color{DarkGray}{$(-20\%)$} \\\textsf{ANS} & $12.6$ & \color{DarkGray}{$(+20\%)$} & & $35.7$ & \color{DarkGray}{$(+25\%)$} & & $55.1$ & \color{DarkGray}{$(+55\%)$} \\\textsf{QMX} & $7.0$ & \color{DarkGray}{$(-33\%)$} & & $18.0$ & \color{DarkGray}{$(-37\%)$} & & $31.1$ & \color{DarkGray}{$(-12\%)$} \\

\textsf{VByte opt.}
& ${10.5}$ &
&
& ${28.5}$ &
&
& ${35.5}$ &
\\

\bottomrule
\end{tabular}
    }
\label{tab:VB_comparison.building_timings}
\end{table}

\parag{Index speed}
Table~\ref{tab:VB_comparison.query_timings} shows the query processing speed of the indexes.
Compared to \textsf{PEF}, the results are indeed very similar to the ones obtained by Ottaviano and Venturini~\cite{2014:ottaviano.venturini}, i.e., there is only a marginal gap between the speed of \textsf{PEF} and $\vb$ when computing boolean conjunctions.
The reason has to be found, again, in the plot illustrated in Fig.~\ref{fig:next_geqs.sparse}. As we can see, for all the jump sizes less than 32, $\vb$ is $2\times$ faster than Elias-Fano, while this advantage vanishes for the longer jumps thanks to the powerful skipping abilities of Elias-Fano~\cite{EBDT2018,2014:ottaviano.venturini,2013:vigna}.
However, we know that this advantage is shrunk because jumps larger than 32 are not very frequent on the tested query logs, as depicted by the distribution of Fig.~\ref{fig:jumps}.

Compared to the other approaches, we can see significant gains with respect to \textsf{OptPDF} (by $40\%$ on $\gov$ and $21\%$ on $\clue$), \textsf{BIC} and \textsf{ANS} ($4\times$ faster on average) and only a slight penalty with respect to \textsf{QMX} (by $7 \div 10\%$) on $\clue$.
On the largest {\cc} dataset, our proposal is consistently the fastest approach.

\begin{table}
    \caption{Timings for AND queries in milliseconds.}
    \centering
    \scalebox{1.0}{
    \begin{tabular}{
                          l@{\hspace{5pt}}
                          l@{\hspace{5pt}}
                          r@{\hspace{2pt}}
                          r@{\hspace{10pt}}
                          r@{\hspace{2pt}}
                          r@{\hspace{10pt}}
                          r@{\hspace{2pt}}
                          r@{\hspace{2pt}}
                          }
\toprule

&
& \multicolumn{2}{l@{}@{}}{\gov}
& \multicolumn{2}{l@{}@{}}{\clue}
& \multicolumn{2}{l@{}@{}}{\cc}
\\

\midrule

\multirow{6}{*}{\rotatebox[origin=c]{90}{{\fontsize{3mm}{3mm}\selectfont \sf TREC-05}}}
&  \textsf{PEF $\epsilon$-opt.} & $0.98$ & \color{DarkGray}{$(+10\%)$} & $5.87$ & \color{DarkGray}{$(+3\%)$} & $8.64$ & \color{DarkGray}{$(+35\%)$} \\&  \textsf{OptPFD} & $1.28$ & \color{DarkGray}{$(+43\%)$} & $8.04$ & \color{DarkGray}{$(+41\%)$} & $8.92$ & \color{DarkGray}{$(+40\%)$} \\&  \textsf{BIC} & $4.14$ & \color{DarkGray}{$(+364\%)$} & $25.42$ & \color{DarkGray}{$(+346\%)$} & $63.50$ & \color{DarkGray}{$(+894\%)$} \\&  \textsf{ANS} & $4.21$ & \color{DarkGray}{$(+372\%)$} & $25.98$ & \color{DarkGray}{$(+356\%)$} & $27.49$ & \color{DarkGray}{$(+330\%)$} \\&  \textsf{QMX} & $0.88$ & \color{DarkGray}{$(-1\%)$} & $5.30$ & \color{DarkGray}{$(-7\%)$} & $7.09$ & \color{DarkGray}{$(+11\%)$} \\
&
\textsf{VByte opt.}
& ${0.89}$ &
& ${5.70}$ &
& ${6.39}$ &
\\

\midrule

\multirow{6}{*}{\rotatebox[origin=c]{90}{{\fontsize{3mm}{3mm}\selectfont \sf TREC-06}}}
&  \textsf{PEF $\epsilon$-opt.} & $2.19$ & \color{DarkGray}{$(+4\%)$} & $9.59$ & \color{DarkGray}{$(+7\%)$} & $11.77$ & \color{DarkGray}{$(+40\%)$} \\&  \textsf{OptPFD} & $3.00$ & \color{DarkGray}{$(+42\%)$} & $11.95$ & \color{DarkGray}{$(+33\%)$} & $11.73$ & \color{DarkGray}{$(+40\%)$} \\&  \textsf{BIC} & $9.93$ & \color{DarkGray}{$(+369\%)$} & $37.87$ & \color{DarkGray}{$(+322\%)$} & $81.52$ & \color{DarkGray}{$(+873\%)$} \\&  \textsf{ANS} & $9.48$ & \color{DarkGray}{$(+348\%)$} & $38.07$ & \color{DarkGray}{$(+325\%)$} & $35.48$ & \color{DarkGray}{$(+323\%)$} \\&  \textsf{QMX} & $2.11$ & \color{DarkGray}{$(-1\%)$} & $8.07$ & \color{DarkGray}{$(-10\%)$} & $9.44$ & \color{DarkGray}{$(+13\%)$} \\
&
\textsf{VByte opt.}
& ${2.12}$ &
& ${8.96}$ &
& ${8.38}$ &
\\

\bottomrule
\end{tabular}
    }
\label{tab:VB_comparison.query_timings}
\end{table}

%

\section{Conclusions}\label{sec:conclusions}
We have presented an optimization algorithm for point-wise encoders that splits a sorted integer sequence into variable-sized partitions to improve its compression and has a linear-time/constant-space complexity.
We have also proved that the algorithm is \emph{optimal}, i.e., it finds the partitioning that minimizes the space of the representation.
For point-wise encoders, this is sensibly better than approaches based on dynamic-programming on all aspects: time/space complexity and practical performance.

By applying our technique to the ubiquitous Variable-Byte encoding, we have exhibited a $2\times$-better compression ratio and build
optimally-partitioned indexes $2\times$ faster than the linear-time dynamic programming approach.
Despite the significant space savings, the partitioned representation does not introduce penalties at query processing time
compared to the un-partitioned case.

As a last note, we mention the possibility of introducing another encoder for \emph{representing the runs} of the posting lists. Obviously, a run of consecutive integers can be described with just the information stored in the first level of representation, i.e., the size of the run.
Although our framework can be extended to include this case, the algorithm and its analysis become much more complicated.
This additional complexity may not pay off, because space improved by less than 5\% on the tested datasets.

\bibliographystyle{IEEEtran}
\bibliography{IEEEabrv,bibliography}

\begin{thebibliography}{10}
\providecommand{\url}[1]{#1}
\csname url@samestyle\endcsname
\providecommand{\newblock}{\relax}
\providecommand{\bibinfo}[2]{#2}
\providecommand{\BIBentrySTDinterwordspacing}{\spaceskip=0pt\relax}
\providecommand{\BIBentryALTinterwordstretchfactor}{4}
\providecommand{\BIBentryALTinterwordspacing}{\spaceskip=\fontdimen2\font plus
\BIBentryALTinterwordstretchfactor\fontdimen3\font minus
  \fontdimen4\font\relax}
\providecommand{\BIBforeignlanguage}[2]{{%
\expandafter\ifx\csname l@#1\endcsname\relax
\typeout{** WARNING: IEEEtran.bst: No hyphenation pattern has been}%
\typeout{** loaded for the language `#1'. Using the pattern for}%
\typeout{** the default language instead.}%
\else
\language=\csname l@#1\endcsname
\fi
#2}}
\providecommand{\BIBdecl}{\relax}
\BIBdecl

\bibitem{2006:zobel.moffat}
J.~Zobel and A.~Moffat, ``Inverted files for text search engines,'' \emph{ACM
  Computing Surveys}, vol.~38, no.~2, pp. 1--56, 2006.

\bibitem{2008:manning.raghavan.ea}
C.~Manning, P.~Raghavan, and H.~Sch{\"u}tze, \emph{Introduction to Information
  Retrieval}.\hskip 1em plus 0.5em minus 0.4em\relax Cambridge University
  Press, 2008.

\bibitem{Buttcher-book}
S.~B{\"u}ttcher, C.~Clarke, and G.~Cormack, \emph{Information retrieval:
  implementing and evaluating search engines}.\hskip 1em plus 0.5em minus
  0.4em\relax MIT Press, 2010.

\bibitem{2013:curtiss.becker.ea}
M.~Curtiss, I.~Becker, T.~Bosman, S.~Doroshenko, L.~Grijincu, T.~Jackson,
  S.~Kunnatur, S.~Lassen, P.~Pronin, S.~Sankar, G.~Shen, G.~Woss, C.~Yang, and
  N.~Zhang, ``Unicorn: A system for searching the social graph,'' in
  \emph{Proceedings of the Very Large Database Endowment (VLDB)}, vol.~6,
  no.~11, 2013, pp. 1150--1161.

\bibitem{2012:busch.gade.ea}
M.~Busch, K.~Gade, B.~Larson, P.~Lok, S.~Luckenbill, and J.~Lin, ``Earlybird:
  Real-time search at twitter,'' in \emph{Proceedings of the 28th International
  Conference on Data Engineering (ICDE)}.\hskip 1em plus 0.5em minus
  0.4em\relax IEEE, 2012, pp. 1360--1369.

\bibitem{hristidis2003efficient}
V.~Hristidis, Y.~Papakonstantinou, and L.~Gravano, ``Efficient {IR}-style
  keyword search over relational databases,'' in \emph{Proceedings 2003 VLDB
  Conference}.\hskip 1em plus 0.5em minus 0.4em\relax Elsevier, 2003, pp.
  850--861.

\bibitem{raman2007lazy}
V.~Raman, L.~Qiao, W.~Han, I.~Narang, Y.-L. Chen, K.-H. Yang, and F.-L. Ling,
  ``Lazy, adaptive rid-list intersection, and its application to index
  anding,'' in \emph{Proceedings of the 2007 ACM SIGMOD international
  conference on Management of data}.\hskip 1em plus 0.5em minus 0.4em\relax
  ACM, 2007, pp. 773--784.

\bibitem{bruno2002holistic}
N.~Bruno, N.~Koudas, and D.~Srivastava, ``Holistic twig joins: optimal {XML}
  pattern matching,'' in \emph{Proceedings of the 2002 ACM SIGMOD international
  conference on Management of data}.\hskip 1em plus 0.5em minus 0.4em\relax
  ACM, 2002, pp. 310--321.

\bibitem{debnath2011skimpystash}
B.~Debnath, S.~Sengupta, and J.~Li, ``Skimpystash: {RAM} space skimpy key-value
  store on flash-based storage,'' in \emph{Proceedings of the 2011 ACM SIGMOD
  International Conference on Management of data}.\hskip 1em plus 0.5em minus
  0.4em\relax ACM, 2011, pp. 25--36.

\bibitem{EBDT2018}
G.~E. Pibiri and R.~Venturini, ``Inverted index compression,''
  \emph{Encyclopedia of Big Data Technologies}, pp. 1--8, 2018.

\bibitem{thiel1972program}
L.~H. Thiel and H.~Heaps, ``Program design for retrospective searches on large
  data bases,'' \emph{Information Storage and Retrieval}, vol.~8, no.~1, pp.
  1--20, 1972.

\bibitem{1999:williams.zobel}
H.~E. Williams and J.~Zobel, ``Compressing integers for fast file access,''
  \emph{The Computer Journal}, vol.~42, no.~3, pp. 193--201, 1999.

\bibitem{2009:dean}
J.~Dean, ``Challenges in building large-scale information retrieval systems:
  invited talk,'' in \emph{Proceedings of the 2nd International Conference on
  Web Search and Data Mining (WSDM)}, 2009.

\bibitem{protobuf}
``{P}rotocol {B}uffers - {G}oogle's data interchange format,''
  \url{https://github.com/google/protobuf}, accessed on 15-04-2018.

\bibitem{bhattacharjee2009efficient}
B.~Bhattacharjee, L.~Lim, T.~Malkemus, G.~Mihaila, K.~Ross, S.~Lau,
  C.~McArthur, Z.~Toth, and R.~Sherkat, ``Efficient index compression in {DB2}
  {LUW},'' \emph{Proceedings of the VLDB Endowment}, vol.~2, no.~2, pp.
  1462--1473, 2009.

\bibitem{2011:stepanov.gangolli.ea}
A.~Stepanov, A.~Gangolli, D.~Rose, R.~Ernst, and P.~Oberoi, ``Simd-based
  decoding of posting lists,'' in \emph{Proceedings of the 20th International
  Conference on Information and Knowledge Management (CIKM)}, 2011, pp.
  317--326.

\bibitem{redisearch}
``{R}edi{S}earch,''
  \url{https://github.com/RedisLabsModules/RediSearch/blob/master/docs/DESIGN.md},
  accessed on 15-04-2018.

\bibitem{upscaledb}
``{U}pscale{DB},'' \url{https://upscaledb.com/about01.html#compression},
  accessed on 15-04-2018.

\bibitem{dropboxtechblog}
``{D}ropbox {T}echblog,''
  \url{https://blogs.dropbox.com/tech/2016/09/improving-the-performance-of-full-text-search/},
  accessed on 15-04-2018.

\bibitem{2000:moffat.stuiver}
A.~Moffat and L.~Stuiver, ``Binary interpolative coding for effective index
  compression,'' \emph{Information Retrieval Journal}, vol.~3, no.~1, pp.
  25--47, 2000.

\bibitem{2014:ottaviano.venturini}
G.~Ottaviano and R.~Venturini, ``Partitioned {E}lias-{F}ano indexes,'' in
  \emph{Proceedings of the 37th International Conference on Research and
  Development in Information Retrieval (SIGIR)}, 2014, pp. 273--282.

\bibitem{abadi2013design}
D.~Abadi, P.~Boncz, S.~Harizopoulos, S.~Idreos, S.~Madden \emph{et~al.}, ``The
  design and implementation of modern column-oriented database systems,''
  \emph{Foundations and Trends{\textregistered} in Databases}, vol.~5, no.~3,
  pp. 197--280, 2013.

\bibitem{2015:ottaviano.tonellotto.ea}
G.~Ottaviano, N.~Tonellotto, and R.~Venturini, ``Optimal space-time tradeoffs
  for inverted indexes,'' in \emph{Proceedings of the 8th Annual International
  ACM Conference on Web Search and Data Mining (WSDM)}, 2015, pp. 47--56.

\bibitem{2017:pibiri.venturini}
G.~E. Pibiri and R.~Venturini, ``{C}lustered {E}lias-{F}ano indexes,''
  \emph{{ACM} {T}ransactions on {I}nformation {S}ystems}, vol.~36, no.~1, pp.
  1--33, 2017.

\bibitem{ANS2}
A.~Moffat and M.~Petri, ``Index compression using byte-aligned {ANS} coding and
  two-dimensional contexts,'' in \emph{Proceedings of the Eleventh {ACM}
  International Conference on Web Search and Data Mining, {WSDM} 2018, Marina
  Del Rey, CA, USA, February 5-9, 2018}, 2018, pp. 405--413.

\bibitem{2009:yan.ding.ea}
H.~Yan, S.~Ding, and T.~Suel, ``Inverted index compression and query processing
  with optimized document ordering,'' in \emph{Proceedings of the 18th
  International Conference on World Wide Web (WWW)}, 2009, pp. 401--410.

\bibitem{duda2013asymmetric}
J.~Duda, ``Asymmetric numeral systems: entropy coding combining speed of
  huffman coding with compression rate of arithmetic coding,'' \emph{arXiv
  preprint arXiv:1311.2540}, 2013.

\bibitem{2014:trotman}
A.~Trotman, ``Compression, simd, and postings lists,'' in \emph{Proceedings of
  the 2014 Australasian Document Computing Symposium}.\hskip 1em plus 0.5em
  minus 0.4em\relax ACM, 2014, p.~50.

\bibitem{2010:silvestri.venturini}
F.~Silvestri and R.~Venturini, ``{VSE}ncoding: Efficient coding and fast
  decoding of integer lists via dynamic programming,'' in \emph{Proceedings of
  the 19th International Conference on Information and Knowledge Management
  (CIKM)}, 2010, pp. 1219--1228.

\bibitem{2011:ferragina.nitto.ea}
P.~Ferragina, I.~Nitto, and R.~Venturini, ``On optimally partitioning a text to
  improve its compression,'' \emph{Algorithmica}, vol.~61, no.~1, pp. 51--74,
  2011.

\bibitem{2003:buchsbaum.ea}
A.~Buchsbaum, G.~Fowler, and R.~Giancarlo, ``Improving table compression with
  combinatorial optimization,'' \emph{Journal of the ACM}, vol.~50, no.~6, pp.
  825--851, 2003.

\bibitem{1975:elias}
P.~Elias, ``Universal codeword sets and representations of the integers,''
  \emph{IEEE {T}ransactions on {I}nformation {T}heory}, vol.~21, no.~2, pp.
  194--203, 1975.

\bibitem{1966:golomb}
S.~Golomb, ``Run-length encodings,'' \emph{IEEE {T}ransactions on {I}nformation
  {T}heory}, vol.~12, no.~3, pp. 399--401, 1966.

\bibitem{1971:fano}
R.~M. Fano, ``On the number of bits required to implement an associative
  memory,'' \emph{Memorandum 61, Computer Structures Group, MIT}, 1971.

\bibitem{1974:elias}
P.~Elias, ``Efficient storage and retrieval by content and address of static
  files,'' \emph{Journal of the ACM}, vol.~21, no.~2, pp. 246--260, 1974.

\bibitem{2007:silvestri}
F.~Silvestri, ``Sorting out the document identifier assignment problem,'' in
  \emph{Proceedings of the 29th European Conference on IR Research (ECIR)},
  2007, pp. 101--112.

\bibitem{2015:plaisance.kurz.ea}
J.~Plaisance, N.~Kurz, and D.~Lemire, ``Vectorized {VB}yte decoding,'' in
  \emph{International Symposium on Web Algorithms (iSWAG)}, 2015.

\bibitem{2018:lemire.kurz.ea}
D.~Lemire, N.~Kurz, and C.~Rupp, ``{S}tream-{VB}yte: faster byte-oriented
  integer compression,'' \emph{Information Processing Letters}, vol. 130, pp.
  1--6, 2018.

\bibitem{2013:vigna}
S.~Vigna, ``Quasi-succinct indices,'' in \emph{Proceedings of the 6th ACM
  International Conference on Web Search and Data Mining (WSDM)}, 2013, pp.
  83--92.

\end{thebibliography}

\vspace{-1.5cm}
\begin{IEEEbiography}[{\includegraphics[width=1in,height=1.25in,clip,keepaspectratio]{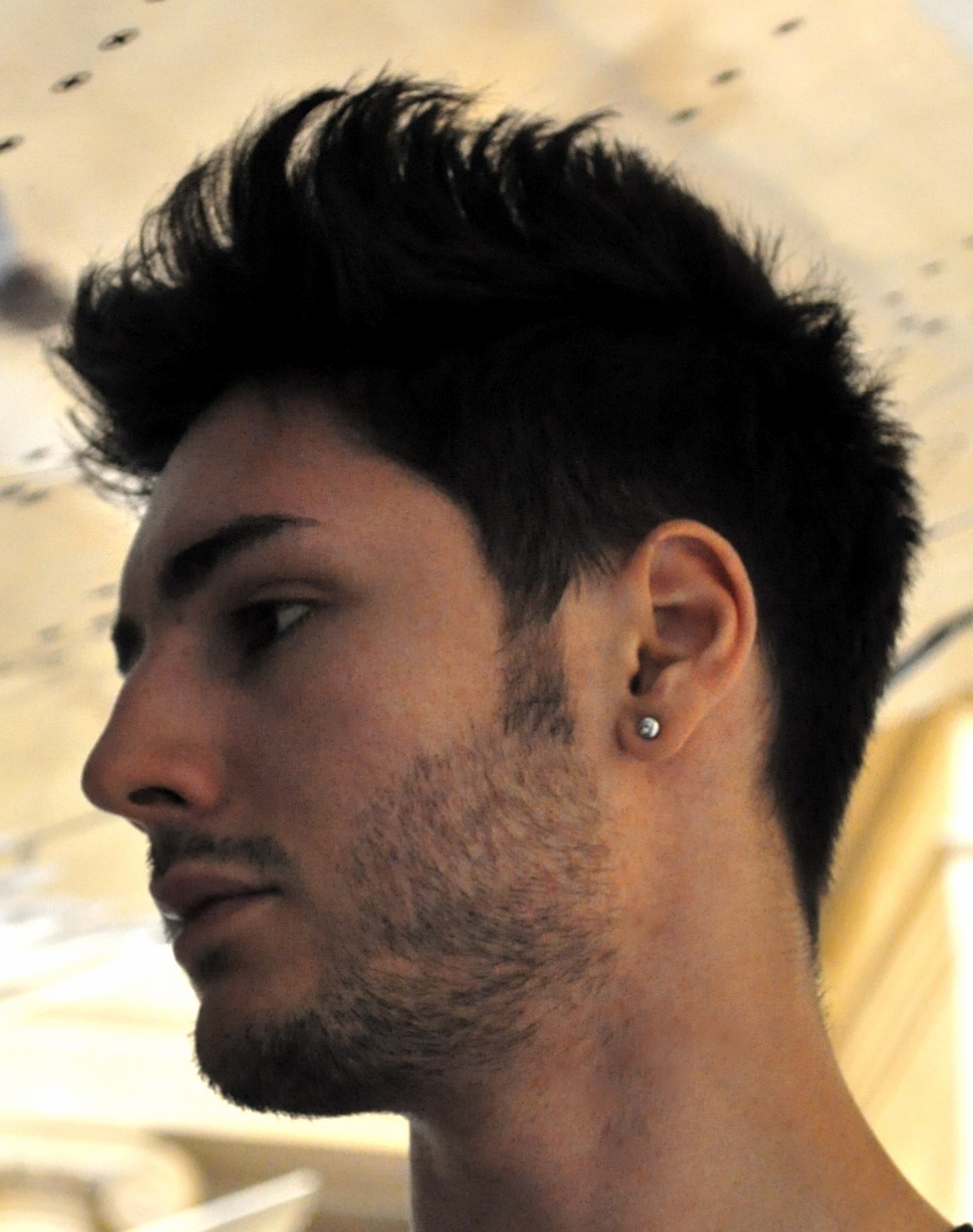}}]{Giulio Ermanno Pibiri}(\url{http://pages.di.unipi.it/pibiri}) received his Bachelor Degree in Computer Engineering from the University of Florence in 2012; Master Degree and Ph.D. in Computer Science from the University of Pisa in 2015 and 2019 respectively.
His research interests involve data compression algorithms for indexing massive datasets, data structures and information retrieval with focus on efficiency.
\end{IEEEbiography}

\vspace{-1.3cm}
\begin{IEEEbiography}
[{\includegraphics[width=1in,height=1.25in,clip,keepaspectratio]{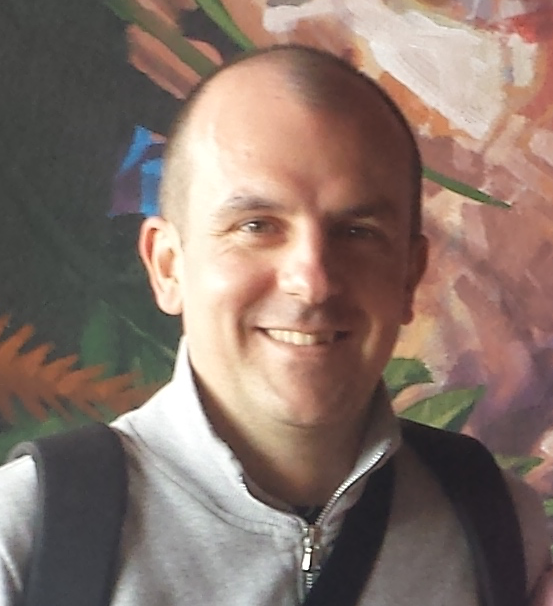}}] {Rossano Venturini}(\url{http://pages.di.unipi.it/rossano}) is Associate Professor of Computer Science at the University of Pisa. He received his Ph.D. from the University of Pisa in 2010. His research interests are mainly focused on the design and the analysis of algorithms and data structures with focus on indexing and searching large textual collections.
He won two Best Paper Awards at ACM SIGIR in 2014 and 2015.
\end{IEEEbiography} 

\end{document}